\definecolor{darkblue}{rgb}{0,0,0.45}
\patchcmd{\section}{\scshape}{\bfseries\large}{}{}
\def\@seccntformat#1{\csname the#1\endcsname.\space}
\patchcmd{\abstract}{\scshape\abstractname}{\textbf{\abstractname}}{}{}
\def\l@subsection{\@tocline{2}{0pt}{3em}{5em}{}}
\newtheorem{theorem}{Theorem}
\newtheorem{lemma}{Lemma}
\newtheorem{remark}{Remark}
\newtheorem{proposition}[theorem]{Proposition}
\newtheorem*{conjecture*}{Conjecture}
\newtheorem*{question*}{Question}
\newtheorem*{theorem*}{Theorem}
\numberwithin{definition}{section}
\renewcommand{\AA}{\mathbb{A}}
\newcommand{\CC}{\mathbb{C}}
\newcommand{\NN}{\mathbb{N}}
\newcommand{\PP}{\mathbb{P}}
\newcommand{\QQ}{\mathbb{Q}}
\newcommand{\RR}{\mathbb{R}}
\newcommand{\ZZ}{\mathbb{Z}}
\title{Deciding subspace reachability problems with application to Skolem's Problem}
\author{Samuel Everett}
\email{same@uchicago.edu}
\subjclass[2020]{11B37, 03D99, 68Q01}
\keywords{Orbit Problem, Skolem Problem, Reachability, Linear Dynamical Systems}
\begin{document}

\begin{abstract}
The higher-dimensional version of Kannan and Lipton's Orbit Problem asks whether it is decidable if a target subspace can be reached from a starting point under repeated application of a linear transformation. Similarly, the continuous analog of the Orbit Problem asks if a flow induced by a linear system of differential equations ever reaches some specified subspace. The decidability of both problems remains open, and in fact the problems generalize the discrete and continuous versions of Skolem's Problem. The object of this paper is to communicate a geometric perspective of the discrete and continuous Orbit Problems, alternate to the traditional and highly technical algebraic and number-theoretic approaches to the problem. We derive a simple decision procedure capable of deciding a certain class of instances of the Orbit Problem, and, as an application, we obtain alternate proofs to a number of results using elementary geometric arguments.
\end{abstract}

\maketitle

\tableofcontents

\section{Introduction}

\subsection{Orbit Problem Overview}
In a pair of seminal papers \cites{kannan1980orbit,kannan1986polynomial} Kannan and Lipton introduced the \emph{Orbit Problem}, motivated by reachability questions of linear sequential machines raised by Harrison \cite{harrison1969lectures}. Given a linear transformation $A \in \mathbb{Q}^{d \times d}$, and elements $x,y \in \mathbb{Q}^d$, the Orbit Problem asks whether it can be decided if there exists an $n \in \NN$, such that $A^nx = y$. Kannan and Lipton proved decidability of this point-to-point version of the problem, but remarked that when the target is a subspace $U$ of $\QQ^d$, the problem becomes far more difficult.
One may additionally consider a \emph{continuous} version of the Orbit Problem, in which it is to be decided whether a flow induced by a linear system of differential equations reaches a target hyperplane, or whether real-valued exponential polynomials have any real roots \cite{bell2010continuous}.
More formally, presented with an ordinary differential equation $dx(t)/dt = Ax(t)$, with $A \in \mathbb{Q}^{d \times d}$, and initial condition $x(0) \in \mathbb{Q}^d$, it is to be decided whether there exists a $t \in \mathbb{R}^+$ such that $c^\top \exp(At)x(0) = 0$ for $c \in \mathbb{Q}^d$.
Despite the importance of these problems due to their connection with Skolem's Problem and the foundations of program verification and control theory, progress on the problems has been slow.

In the discrete-time setting, no progress was made on the higher-dimensional Orbit Problem until 2013, when in beautiful work Chonev, Ouaknine, and Worrell proved that the higher-dimensional Orbit Problem is decidable whenever the dimension of the target space is one, two, or three \cite{chonev2013orbit,chonev2016complexity}.
In following work, the same authors considered a version of the problem where the target space is not a subspace but rather a polytope \cite{chonev2014polyhedron}, proving decidability when the target is of dimension three or less, and hardness with respect to long-standing number theoretic open problems for higher dimensions.
In related work, decidability for dimensions up to three was also shown for generalized cases in which the source and target sets are both either polytopes \cite{almagorPolytopeCollision} or semi-algebraic sets \cite{almagorSemialgebraicOrbit}. One other prominent recent approach involves generating sets that are invariant under the action of the matrix $A$ that contain the starting point, and then proving such sets are disjoint from the target set \cite{de2018left,fijalkow2019complete,almagor2022minimal}. However, many difficulties in analyzing the Orbit Problem remain very-much alive, with most substantive progress being made conditioned on long-standing number theoretic conjectures. In particular, the decision procedures provided in these works rely on highly technical methods from number theory, model theory, and algebra.

The continuous-time setting offers no reprieve.
In fact there have been fewer results making progress on the continuous version of the Orbit problem; partially because the usual algebraic and number theoretic techniques do not carry over to the continuous time setting as easily.
Nevertheless, there have been a number of strong results toward decidability.
Early work of Hainry \cite{hainry2008reachability,hainry2009decidability} proved decidability of the continuous point-to-point Orbit Problem, a result which has since been strengthened \cite{dantam2021decidability,chen2015continuous}.
Bell et al. \cite{bell2010continuous} made substantial progress on the problem, particularly proving $\textsf{NP}$-hardness and decidability for dimensions two or less.
In another successful line of work, Chonev, Ouaknine, and Worrell proved decidability up to dimension eight conditioned on Schanuel's Conjecture, and proved decidability up to dimension eight unconditionally for a restricted version of the problem
\cite{chonev2016recurrent,chonev_et_al2016,10.1145/3603543}.
Despite these deep results toward the problem, there remain many obstacles to further progress.

\subsection{Statement of Results}

The aim of this paper is to present an approach to the discrete and continuous Orbit Problems centered around the observation that equivalent problems can be asked and answered in real-projective space rather than Euclidean space. Once this observation has been made, many technical theorems become relatively elementary to prove due to the substantial increase in geometric structure available.

This paper begins development in the discrete-time setting, with first application of our techniques to the discrete-time Orbit Problem and Skolem Problem. Subsequent application of our techniques in the context of flows is straight-forward.
Our first result is concerned with developing a simple geometric algorithm capable of deciding the problem when certain spectral and geometric constraints are satisfied. However, the following result is principally used as a foothold in proving all the other results in this paper.

\begin{theorem}\label{thmOne}
Let $(A, x, U)$ be any instance of the discrete-time Orbit Problem, where $A\in \QQ^{d \times d}$, $x \in \mathbb{Q}^d$ is non-trivial\footnote{We say a point $x$ is \emph{non-trivial} (w.r.t. a matrix $A$) if $x$ has a non-zero component with respect to the Jordan blocks of Jordan matrix $J(A)$. The non-triviality condition on $x$ assures the problem does not collapse to a lower-dimension; if the condition is removed an essentially identical statement holds.}, and $U$ is a target subspace in $\QQ^d$, described by rational parameters. 
Suppose $A$ has $r$ distinct eigenvalues $\lambda_1,\dots,\lambda_r$ of largest modulus. Let $n_i$ denote the multiplicity of the root $\lambda_i$, $i=1,\dots,r$, in the minimal polynomial of $A$. Suppose $n_1= \cdots = n_p$, $p \leq r$, are the largest such values.
Then there exists a subspace $W$ computable from $A$ and $x$ of dimension $p$, such that if $W \cap U = \{0\}$, it is decidable whether there exists $n \in \NN$ such that $A^nx \in U$.
\end{theorem}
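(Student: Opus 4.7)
The plan is to pass to real projective space $\RR\PP^{d-1}$ and show that the projective trajectory $[A^n x]$ accumulates on a computable $p$-dimensional projective-linear set $[W]$; disjointness of $[W]$ and $[U]$ then isolates finitely many candidate indices $n$ for which $A^n x$ could lie in $U$, converting the reachability question to a finite check.

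I would first construct $W$ and carry out the asymptotic analysis. Compute the Jordan normal form of $A$ over $\overline{\QQ}$ from the rational characteristic polynomial. Let $\rho$ and $m$ denote the common largest modulus and minimal-polynomial multiplicity of $\lambda_1, \dots, \lambda_p$. For each $i$, select an honest eigenvector $v_i$ heading a size-$m$ Jordan block for $\lambda_i$; since $A$ is rational, $\{\lambda_1,\dots,\lambda_p\}$ is closed under complex conjugation, so the complex span of $\{v_i\}$ descends to a real $p$-dimensional subspace $W \subseteq \RR^d$, all of whose defining data are algebraic and computable. Expanding $x$ in a Jordan basis, the $A^n$-action produces
\begin{equation*}
A^n x \;=\; n^{m-1}\rho^n \sum_{i=1}^{p} c_i (\lambda_i/\rho)^n v_i \;+\; R_n,
\end{equation*}
with $c_i \neq 0$ by non-triviality and $\|R_n\| = O(n^{m-2}\rho^n) + O(n^{m-1}\rho'^n)$ for the next-largest modulus $\rho' < \rho$. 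After projective normalization, $[A^n x]$ thus lies within an $O(1/n)$-neighborhood of $[W]\subseteq\RR\PP^{d-1}$, provided the leading factor $\sum_i c_i (\lambda_i/\rho)^n v_i$ stays bounded away from $0$.

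The decision procedure would then exploit compactness and separation. In $\RR\PP^{d-1}$, the sets $[W]$ and $[U]$ are compact semi-algebraic subsets, disjoint by the hypothesis $W \cap U = \{0\}$, so separated by some $\delta > 0$; once $n$ exceeds $N_0 := \lceil 2C/\delta \rceil$ with $C$ the implicit constant in the convergence rate, we have $A^n x \notin U$, while the remaining indices are checked individually. The main obstacles I anticipate are, first, showing that the leading factor $\sum_i c_i (\lambda_i/\rho)^n v_i$ does not come arbitrarily close to $0$ in $W$ (which is straightforward when the arguments $\lambda_i/\rho$ are roots of unity, and requires either a Weyl equidistribution argument on the torus or an effective Baker-type bound otherwise); and second, making the separation $\delta$ and the convergence constant $C$ effective. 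Both quantities are values of polynomial optimization problems over $\overline{\QQ}$-defined semi-algebraic sets, so standard tools (root separation for algebraic numbers, cylindrical algebraic decomposition, and explicit Baker inequalities) produce computable bounds and hence a terminating algorithm.
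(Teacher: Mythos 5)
Your proposal follows the same route as the paper: compactify to $\RR\PP^{d-1}$, show the projective orbit converges to a computable $p$-dimensional subspace determined by the dominating eigenvalues of top multiplicity in the minimal polynomial, and use the positive separation between that subspace and $U$ to reduce reachability to a finite check. Two points in your execution deserve attention.

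First, the ``main obstacle'' you anticipate --- that the leading factor $\sum_i c_i(\lambda_i/\rho)^n v_i$ might come arbitrarily close to $0$ --- is not actually an obstacle, and you should not reach for Weyl equidistribution or Baker-type bounds (the latter would be a serious liability: effective lower bounds of that kind, in the generality you would need them, are precisely what is missing in Skolem-type problems). The $v_i$ are eigenvectors for distinct eigenvalues, hence linearly independent, and each coefficient has constant modulus $|c_i(\lambda_i/\rho)^n| = |c_i| > 0$; therefore $\bigl\| \sum_i c_i(\lambda_i/\rho)^n v_i \bigr\| \geq \kappa \min_i |c_i|$ for a computable $\kappa>0$ depending only on the $v_i$. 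The paper's lemmas on Jordan-block asymptotics encode exactly this fact without any Diophantine input.

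Second, your $W$ is built from $A$ alone --- one heading eigenvector per dominating eigenvalue --- and this fails when some $\lambda_i$ has several Jordan blocks of the maximal size $m$: the dominant direction is then the particular combination of the heading vectors of \emph{all} those blocks, with coefficients read off from the expansion of $x$ in the Jordan basis, and the orbit need not approach the span of the single $v_i$ you selected; your displayed expansion of $A^n x$ is likewise incorrect in that case. This is why the theorem asserts $W$ is computable from $A$ \emph{and} $x$, and the paper's lemma on equal-size blocks computes these coefficients explicitly. The fix is routine. Finally, note that the paper avoids effectivizing your constant $C$ altogether: it iterates until the (algebraic, exactly comparable) angle between $A^n x$ and $W$ beats the minimal angle between $W$ and $U$, so no a priori cutoff $N_0$ is needed; your explicit-bound version is a valid alternative but demands the extra bookkeeping you mention.
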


Once the step of compactifying state space has been taken by translating the problem to projective space, proof of Theorem \ref{thmOne} is conceptually easy: it essentially reduces to a case analysis of the asymptotics of any orbit under iteration of a matrix in Jordan normal form.
In particular, Jordan blocks comprising the normal form of the matrix $A$ carry all the information needed to determine the subspace any given initial condition will develop toward, after compactifying by projecting onto real projective space.

Theorem \ref{thmOne} is to be primarily viewed as providing a \emph{tool} for determining decidability of problem classes. For instance, one immediate application of Theorem \ref{thmOne} is in proving a weakened version of the breakthrough result of Chonev, Ouaknine, and Worrell that the higher-dimensional Orbit Problem is decidable whenever the dimension of the target space is one \cite{chonev2013orbit,chonev2016complexity}. Namely, we use Theorem \ref{thmOne} to give a proof of the following theorem using only elementary geometric arguments.

\begin{theorem}\label{thmDimOne}
Let $(A, x, U)$ be any instance of the Orbit Problem such that all eigenvalues of $A\in \QQ^{d \times d}$ have distinct modulus up to complex conjugate pairs, $x \in \mathbb{Q}^d$, and $U$ a subspace of dimension at most one. Then it is decidable whether there exists an $n \in \mathbb{N}$ such that $A^nx \in U$.
\end{theorem}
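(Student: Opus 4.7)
The plan is to use Theorem \ref{thmOne} as the main engine and to handle its residual case $U \subseteq W$ by an $A$-invariant generalized eigenspace decomposition. The case $\dim U = 0$ is the classical point-to-point orbit problem of Kannan and Lipton, so I assume $\dim U = 1$. The distinct-modulus-up-to-conjugation hypothesis forces the maximum-modulus eigenvalues to consist of either a single real eigenvalue (so $r = p = 1$ and $\dim W = 1$) or a single complex conjugate pair (so $r = p = 2$ and $\dim W = 2$). Theorem \ref{thmOne} settles the subcase $W \cap U = \{0\}$; the burden is therefore $U \subseteq W$.

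In that case, I decompose $\QQ^d = G_1 \oplus V$ into the generalized eigenspace $G_1$ of the dominant eigenvalue(s) and the $A$-invariant sum $V$ of the remaining generalized eigenspaces, and write $x = x_1 + x'$ accordingly. Since $U \subseteq W \subseteq G_1$ while $G_1 \cap V = \{0\}$, the condition $A^n x \in U$ is equivalent to the conjunction of $A^n x_1 \in U$ and $A^n x' = 0$; the latter is decidable by a direct nilpotent analysis on $A|_V$. For the former, I use the commuting Jordan decomposition $A|_{G_1} = M + N$ to expand $A^n|_{G_1} = \sum_k \binom{n}{k} M^{n-k} N^k$ as a finite sum. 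In the real case ($W = U$), $M = \lambda_1 I$, and after factoring $\lambda_1^n$ the condition becomes $P(n) \in U$ for a vector-valued polynomial $P$, which is a finite system of scalar polynomial equations in $n$ and hence decidable. In the complex case, $M$ acts on each real Jordan level as the scaled rotation $|\lambda_1| R_\theta$ with $\theta = \arg \lambda_1$; the rotation factor cancels from the conditions constraining the components of $A^n x_1$ outside $W$, yielding polynomial equations in $n$ that are satisfied identically exactly when $x_1 \in W$, and otherwise have finitely many solutions which can be checked individually.

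The main obstacle will be the complex case with $x_1 \in W$, where the orbit stays in the two-dimensional invariant subspace $W$ and the question collapses to whether the scaled-rotation orbit $|\lambda_1|^n R_{n\theta} x_1$ ever lies on the line $U$. This reduces to a single angular equation of the form $n\theta \equiv \arg u - \arg x_1 \pmod \pi$. When $\theta / \pi$ is rational, the orbit is periodic on the projective line associated to $W$ and a finite check suffices; when $\theta / \pi$ is irrational, one must invoke classical number theory---namely the order-$2$ case of the Skolem problem for algebraic data, whose decidability reduces to Baker's theorem on linear forms in logarithms---to decide whether the algebraic angle $\arg u - \arg x_1$ lies in the orbit $\{ n\theta \bmod \pi \}$. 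Every other subcase collapses to finite polynomial-root or periodic checks, so this angular equation is the sole non-elementary ingredient in the argument.
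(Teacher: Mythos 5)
Your proposal is correct, and it agrees with the paper's proof up to and including the opening dichotomy: invoke Theorem \ref{thmOne}, note that $\dim U\le 1$ forces either $W\cap U=\{0\}$ (done) or $U\subseteq W$, and observe that the distinct-modulus hypothesis makes $W$ one- or two-dimensional. Where you diverge is the residual case $U\subseteq W$. The paper dispatches it with a purely geometric invariance argument: since $W$ is $A$-invariant and $A$ is nonsingular, $x\notin W$ already implies $A^nx\notin U$ for all $n$; if $x\in W$ one restricts to $A|_W$ and lands immediately in the trivial one-dimensional case or the planar scaled rotation (the singular case is handled by a separate reduction). You instead split $\QQ^d=G_1\oplus V$ into dominant and non-dominant generalized eigenspaces and grind out explicit binomial expansions of the Jordan powers, reducing everything to polynomial identities in $n$. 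Both routes are valid and both bottom out in the same place --- the scaled planar rotation, i.e.\ the angular equation $n\theta\equiv\phi\pmod{\pi}$, which the paper simply declares ``well known to be decidable'' and which you correctly identify as the order-two Skolem problem (Baker is overkill there; a height argument on $\lambda_1/\overline{\lambda_1}$, which is not a root of unity by non-degeneracy, pins down the unique candidate $n$). Your route is more computational but more self-reliant: it does not need the invertibility of $A$ as a structural crutch, and it exposes exactly which conditions are polynomial in $n$. Two small imprecisions worth fixing: $A|_V$ is not nilpotent in general --- the correct statement is that $A^nx'=0$ for some $n$ iff $x'$ lies in the generalized $0$-eigenspace, in which case the solution set is a final segment $\{n\ge n_0\}$ that must be intersected with the solution set of $A^nx_1\in U$; and in the complex case with $x_1\notin W$, the polynomial conditions forcing the lower Jordan levels of $A^nx_1$ to vanish actually have \emph{no} solutions (the bottom-most nonzero level is $D^n$ times a fixed nonzero vector with $D$ invertible), not merely finitely many, though this only strengthens your conclusion.
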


The use of Theorem \ref{thmOne} reaches beyond the discrete-time Orbit Problem to Skolem's Problem as well (see e.g. \cite{ouaknine2015linear,halava2005skolem} for review), which is known to be an impenetrable problem where even simple cases escape our most advanced techniques. 
Indeed, Tao described the openness of this problem as ``faintly outrageous," as it indicates ``we do not know how to decide the halting problem even for linear automata" \cite{tao2008structure}.

Decidability of Skolem's Problem for linear recurrence sequences of dimension three and four was given in the 1980's \cite{shorey1984distance,vereshchagin1985occurrence}, along with decidability when there are at most three dominating characteristic roots of the sequence (see \cite{sha2019effective} for refinements of this work).
The methods used to prove such results rely crucially on sophisticated results
in transcendental number theory, particularly Baker's lower bounds on the magnitudes of linear forms in logarithms of algebraic numbers, and van der Poorten's results in the setting of $p$-adic valuations.
In contrast, we use Theorem \ref{thmOne} paired with a basic geometric insight to provide an alternate proof of decidability of Skolem's Problem for the case of arbitrarily many roots of largest modulus, so long as the set of such dominating roots contains a real root with largest multiplicity in the minimal polynomial.

\begin{theorem}\label{thmSkolemTwo}
Let $\{u_n\}_{n=0}^\infty$ be an order $d$ non-degenerate linear recurrence sequence, with $A$ the companion matrix, and suppose the initial terms form a vector $x \in \QQ^d$ non-trivial with respect to the matrix $A$. 
Suppose the linear recurrence sequence has $r\leq d$ distinct characteristic roots $\lambda_1,\dots,\lambda_r$ of largest modulus. Let $n_i$ denote the multiplicity of the root $\lambda_i$, $i=1,\dots,r$, in the minimal polynomial of $A$. Suppose $\lambda_1$ is real, and $n_1>n_i$, $i=2,\dots,r$.
Then it is decidable whether the sequence has a zero term.
\end{theorem}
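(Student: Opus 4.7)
The plan is to reduce Skolem's problem to subspace reachability and then apply Theorem \ref{thmOne} in a single shot; the ``basic geometric insight'' is that the eigenvector of a companion matrix at any eigenvalue $\lambda$ has a distinguished nonzero coordinate, so the asymptotic direction produced by Theorem \ref{thmOne} cannot lie inside the Skolem target hyperplane.

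First I translate the problem: for a non-degenerate recurrence of order $d$ with companion matrix $A$ and state vector $x = (u_0,\dots,u_{d-1})^\top$, we have $u_n = e_1^\top A^n x$, so the sequence has a zero term if and only if $A^n x \in U$ for some $n$, where $U = \{y \in \QQ^d : y_1 = 0\}$ is the coordinate hyperplane. Next, since $n_1 > n_i$ for $i = 2,\dots,r$, there is exactly one dominating eigenvalue of largest multiplicity in the minimal polynomial of $A$, namely $\lambda_1$, so $p = 1$ in Theorem \ref{thmOne}. Theorem \ref{thmOne} thus produces a $1$-dimensional subspace $W$ computable from $A$ and $x$, which from its construction is the projective limit of $[A^n x]$ in real projective space; because $\lambda_1$ is real and $x$ is non-trivial, this forces $W = \mathrm{span}(v)$, where $v$ is the real eigenvector of $A$ corresponding to $\lambda_1$.

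A one-line coordinate observation now closes the argument. Since $A$ is a companion matrix, any eigenvector at an eigenvalue $\lambda$ is (up to scalar) of the form $v = (1,\lambda,\lambda^2,\dots,\lambda^{d-1})^\top$, whose first coordinate equals $1$. Hence $v \notin U$, and since $W$ is $1$-dimensional this immediately gives $W \cap U = \{0\}$. Theorem \ref{thmOne} then decides whether $A^n x \in U$ for some $n \in \NN$, equivalently whether the sequence has a zero term.

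The main (and rather minor) obstacle is reading off from the proof of Theorem \ref{thmOne} that the subspace $W$ is precisely the projective limit direction of the orbit; once that identification is established, the argument reduces to the single observation that the first coordinate of the companion eigenvector is nonzero, which is forced by the companion structure of $A$ together with reality of $\lambda_1$.
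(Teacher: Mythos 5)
Your proof is correct, and it follows the same overall strategy as the paper (reduce the zero problem to hitting a coordinate hyperplane, then invoke Theorem \ref{thmOne} with the one-dimensional attracting subspace $W$), but the step that certifies $W \cap U = \{0\}$ is genuinely different. The paper routes through Proposition \ref{thmSkolemOne}: it only uses that $W$ is spanned by \emph{some} nonzero vector $y$, observes that $\bigcap_i E_i = \{0\}$ forces $y \notin E_i$ for at least one coordinate hyperplane $E_i$, and then exploits the shift structure of the companion matrix to argue that any one of the $E_i$ is an equivalent Skolem target. You instead pin $W$ down exactly: since a companion matrix is non-derogatory, each eigenvalue carries a single Jordan block, so under $n_1 > n_i$ the limit direction is precisely the eigenvector $(1,\lambda_1,\dots,\lambda_1^{d-1})^\top$, every coordinate of which is nonzero (here $\lambda_1 \neq 0$ because $a_0 \neq 0$), so the specific target hyperplane is already missed. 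Your version buys an explicit, checkable certificate and avoids the target-shifting argument; the paper's version buys robustness — it never needs to identify $W$ or use the Vandermonde form of the eigenvector, only that $W$ is a nonzero line, which is why it extends verbatim to settings where the limit direction is an $x$-dependent combination of eigenvectors. The one point you rightly flag as needing care — that the $W$ of Theorem \ref{thmOne} is the projective limit of the orbit and hence the eigendirection — is indeed only visible from the proof of Theorem \ref{thmOne} (via Lemmas \ref{lemJordanBlockLimitOne}--\ref{lemJordanBlockLimitThree}), not its statement, but it does hold, so your argument is complete.
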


We find the continuous version of the Orbit Problem to be amenable to our techniques as well.
This is because real projective space produces a rather forgiving environment for analyzing flows, and particularly for bounding flows into certain regions of space.
To this end, we give an elementary proof of decidability of the state of the art (unconditional) two-dimensional continuous-time Orbit Problem (Continuous Skolem-Pisot Problem) established in \cite{bell2010continuous}---in fact we prove a strengthened version of the statement given in \cite{bell2010continuous}.

\begin{theorem}\label{thmContTwo}
Let $dx(t)/dt = Ax(t)$, $A \in Gl(2, \mathbb{A})$, be an autonomous linear differential equation with $x(0) \in \mathbb{A}^{2}$, and let $U \subset \mathbb{A}^2$ be a line. If the eigenvalues of $A$ are real, it is decidable if the flow intersects $U$. If the eigenvalues of $A$ are complex, the flow intersects $U$ infinitely often.
\end{theorem}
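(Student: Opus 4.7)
The plan is to follow the paper's geometric philosophy and pass to the real projective line $\RR\PP^1$. Writing $U = \ker(c^\top)$ for some $c \in \AA^2$, the flow $t \mapsto x(t) = e^{At} x(0)$ meets $U$ at time $t$ if and only if the projectivised trajectory $[x(t)] \in \RR\PP^1$ passes through the single point $[U]$. From here the two eigenvalue regimes decouple cleanly.

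For the complex case, a real change of basis puts $A$ in the form $\begin{pmatrix} \alpha & -\beta \\ \beta & \alpha \end{pmatrix}$ with $\beta \neq 0$, so that $e^{At} = e^{\alpha t} R_{\beta t}$ where $R_\theta$ denotes rotation by $\theta$. The scalar factor $e^{\alpha t}$ is invisible in $\RR\PP^1$, hence $[x(t)]$ is a rigid rotation of period $\pi/|\beta|$ that sweeps the whole projective line exactly once per period. Therefore $[U]$ is visited once in every period, giving infinitely many intersections with no computation required. The change of basis is over $\AA$, but this is not an issue since $\beta$ (the period data) is invariant and the conclusion is qualitative.

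For the real case, I would use the real Jordan form of $A$ to compute $f(t) := c^\top e^{At} x(0)$ explicitly. It takes one of three shapes: $a e^{\lambda_1 t} + b e^{\lambda_2 t}$ with distinct real algebraic $\lambda_i$; $(a + b t) e^{\lambda t}$; or $a e^{\lambda t}$, where $a, b$ are algebraic numbers effectively computable from the data. Each such function has at most one real zero, and deciding whether that zero lies in $[0, \infty)$ reduces to comparing two algebraic numbers. For instance, in the first shape (assuming without loss of generality $\lambda_1 < \lambda_2$) one has $f(t) = e^{\lambda_1 t}(a + b e^{(\lambda_2-\lambda_1)t})$, so a non-negative zero exists iff $b \neq 0$ and $-a/b \geq 1$, a sign test on algebraic numbers. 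The remaining shapes are even simpler.

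I anticipate no serious obstacle. The only subtle point is the strengthening over the statement in \cite{bell2010continuous}, namely the infinite-visits conclusion in the complex case, and this falls out essentially by inspection once $e^{At}$ is recognised as a scalar dilation composed with a rigid rotation, which is exactly what the projective viewpoint brings into focus; traditional transcendental-number-theoretic machinery is entirely avoided.
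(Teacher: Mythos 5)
Your proposal is correct, and the complex-eigenvalue half coincides with the paper's argument: after projectivising, the trajectory is a periodic orbit sweeping all of $\RR\PP^1$, so it meets the point $\PP U$ once per period. The real-eigenvalue half, however, takes a genuinely different route. The paper stays entirely in $\RR\PP^1$: it computes the attracting fixed point $p$ (the projectivised dominant eigendirection), checks whether $p = \PP U$, and otherwise decides by testing whether $\PP x(0)$ and $p$ lie on the same side of $\PP U$, invoking a continuity/crossing argument --- no solution formula is ever written down. You instead expand $f(t) = c^\top e^{At}x(0)$ via the real Jordan form into one of the shapes $a e^{\lambda_1 t}+b e^{\lambda_2 t}$, $(a+bt)e^{\lambda t}$, $a e^{\lambda t}$, observe each is either identically zero or has at most one real zero, and reduce the membership of that zero in $[0,\infty)$ to a sign comparison of algebraic numbers. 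Your version is more explicit and yields strictly finer information (in the real case the flow meets $U$ at most once unless it lies in $U$ for all time), while the paper's version is the one that advertises the projective ``crossing'' philosophy it wants to generalise to higher dimensions. One small patch: your criterion ``$b \neq 0$ and $-a/b \geq 1$'' misses the degenerate case $a=b=0$, where $f \equiv 0$ (this is the paper's branch $p = \PP U$ with $x(0) \in U$); it is harmless since $f(0)=0$ is then detected at $t=0$, but the case split should say so.
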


We move to describe our techniques, in particular the basic geometric insight behind the results in this paper.

\subsection{Overview of Techniques}

Traditionally, work toward the Orbit Problem has leveraged heavy machinery from various domains of algebraic geometry, mathematical logic, and number theory---particularly algebraic and transcendental number theory. In contrast, our approach is to develop a machinery suitable for application to the Orbit Problem by leveraging basic geometric and dynamical insight.

In essence, we consider not the dynamical system $(\RR^d, A)$ the Orbit Problem is formulated in, but rather the associated induced dynamical system over real projective space (i.e. after projecting onto $S^{d-1}$ and identifying opposite points), with the same compactification performed in the continuous-time analog. Working instead in $\RR\PP^{d-1}$, we can ask equivalent versions of the Orbit Problem that are decidable if and only if the original formulation is decidable.

We find that by compactifying the state space in such a way, the asymptotic behavior of the system becomes far more understandable: in projective space, we have asymptotically stable (converging) orbits, while no-such behavior is expressed in Euclidean space.
As a consequence, we can use arguments unavailable in the original formulation. The central contribution of our paper can then be interpreted as a fine-grained examination of the $\omega$-limit sets---or more generally the stable manifolds---of dynamical systems $(\RR\PP^{d-1}, f)$ for $f \in PGL(\RR^d)$, namely linear maps and flows over real projective space, with this deepened understanding lending itself to deciding the discrete and continuous-time Orbit Problems.
In practice, working directly with real-projective space is not feasible, so instead we consider the evolution of the \emph{angles} between an orbit of a point and some subspace, which has the same desired effect. To make proofs fully rigorous and to speak of ``convergence" of orbits, we then translate between consideration of angles between a point and subspace, and the convergence of a sequence in projective space to a cycle or fixed point.

This reduction to projective space is both surprising and noteworthy for a number of reasons. First, we note that although we lose an entire manifold dimension after compactifying the space, we maintain all the necessary information for deciding the Orbit Problem---indeed, the reduction can be seen as keeping only the ``essential" aspects of the Orbit Problem. In addition, the compactification enables the application of general tools and intuition from the study of continuous dynamical systems over a compact space, for which there is far more structure and results than in the unbounded case, enabling the application of an entire area of mathematics not used to date.

In the setting of flows, asymptotic behavior of autonomous linear differential equations is dictated by the Lyapunov exponents. These exponents determine asymptotic behavior of flows by the magnitudes and signs of the \emph{real parts} of the eigenvalues.
When projecting flows in $\mathbb{R}^d$ onto $\mathbb{RP}^{d-1}$, all flows are asymptotic to stationary points or cycles, since the compact phase space $\mathbb{RP}^{d-1}$ precludes divergence of trajectories.
If a flow passes from one side to another of a target hyperplane then it must intersect it. By using this fact one may obtain stronger results than in the case of maps where asymptotic behaviors, even in real projective space, are far more subtle. We abuse this additional geometric structure present in the study of flows in $\mathbb{RP}^{d-1}$ when proving Theorem \ref{thmContTwo}.

To the best of our knowledge, no previous work on the Orbit Problem and related questions leverage the techniques introduced here. However, a few works come close in spirit to the asymptotic analysis used here, particularly work on termination of linear loops (see \cite{akshay2022robustness, ouaknine2014positivity, braverman2006termination, hosseiniIntegerTermination, tiwari2004termination}), where the observation that the largest eigenvalues dominate the behavior of the program is used repeatedly. 
A similar technique is well-known in the study of linear recurrence sequences, whereby restricting the spectral structure of the problem and using the closed-form solution of an LRS, it can be clear how the characteristic roots of maximal modulus dominate the asymptotic behavior.
Although this paper leverages the same basic principle, we take it much farther to obtain finer results.

\subsection{Related Work}

One of the original sources of motivation for studying the Orbit Problem was Skolem's Problem, due to the fact that Skolem's Problem is a special case of the Orbit Problem. Although Skolem's Problem is easy to describe, it is difficulty to prove deep theorems on account of a lack of machinery.
Indeed, in terms of lower bounds, it is known that Skolem's Problem is NP-hard \cite{blondel2002presence}. This translates to the Orbit Problem when restriction is not placed on the dimension of the target subspace. Moreover, deep work has shown that further advances on the Skolem and Orbit Problem for higher dimensions are likely to be hard due to the fact that such advances in related problems would entail major advances in Diophantine approximation \cite{ouaknine2014positivity}. Despite these difficulties, there continues to be notable advances toward Skolem's problem (see e.g. \cite{lipton2022skolem,biluMFCSskolem}), including recent promising new techniques by way of Universal Skolem Sets \cite{kenison2020skolem, luca2021universal, lucaMFCSuniversal, luca2023skolem}.

Another principle source of motivation in studying the Orbit Problem and related questions comes from the problem of program verification. Specifically, enormous effort has been dedicated to solving the ``Termination Problem," chiefly concerned with deciding whether a ``linear while loop" will terminate (see e.g. \cite{tiwari2004termination,braverman2006termination,hosseiniIntegerTermination,cook2006termination,cook2006terminator,bozga2012deciding,ben2013linear,bradley2005termination,ouaknine2014termination,karimov2022s,vyalyi2011orbits,luca2022algebraic}). 
In particular, we note that the Polytope Hitting Problem where the target is an intersection of half spaces, is closely connected to the higher-dimensional Orbit Problem, and more immediately translates to problems of program verification and termination of linear loops \cite{chonev2014polyhedron}.

\subsection*{Acknowledgements}
The author is grateful to David Cash for the continued feedback and support.
The author is supported by the National Science Foundation Graduate Research Fellowship Program under Grant No. 2140001.

\section{Preliminary results}\label{secPreliminaries}

The purpose of this section is to establish notation, as well as results and machinery used for the proofs of Theorems \ref{thmOne},\ref{thmSkolemTwo}, and \ref{thmContTwo} in Section \ref{sectionTheorems}.
Sections \ref{secAnglesBetweenFlats} and \ref{secAngleEvolution} are the richest, where the key lemmas of this paper are formed.
Specifically, we focus on development in the discrete-time setting for the discrete Orbit Problem. Once the results of this setting are established, techniques and intuition translate to the continuous setting in a very straight-forward way.

Given an instance $(A, x, U)$ of the Orbit Problem, always take $A \in \QQ^{d \times d}$, $x \in \QQ^d$, and $U$ presented via sets of basis vectors from $\QQ^d$. With such a basis in hand it is clearly decidable whether a vector $A^nx \in \QQ^d$ is in $U$.

For the purposes of this paper, there is no harm in additionally supposing that the instances of the Orbit Problem are non-degenerate. An instance $(A, x, U)$ of the Orbit Problem is \emph{degenerate} if there exists two distinct eigenvalues of $A$ whose quotient is a root of unity. If not, the instance is \emph{non-degenerate}. Any instance of the Orbit Problem can be reduced to a finite set of non-degenerate instances.

Let $\lambda_1,\dots,\lambda_m$ label the eigenvalues of $A \in \QQ^{d \times d}$, $m \leq d$. We always assume the roots are labelled so that $|\lambda_1|\geq |\lambda_2| \geq \cdots \geq |\lambda_m|$. In the case $|\lambda_1| = \cdots = |\lambda_r| > |\lambda_{r+1}|$ we say that $A$ has \emph{$r$ roots of maximal modulus}, or \emph{$r$ dominating roots}.

In this paper we work in ambient Euclidean space $\RR^d$ (before compactifying), with orbits $\{A^nx\}_{n=0}^\infty$ contained in $\QQ^d$. All standard algebraic operations, such as sums, products, root finding of polynomials, and computing Jordan canonical forms of rational matrices \cite{cai1994computing} are well-known to be computable, and hence we do not discuss the details of such procedures (see Section \ref{secAlgebraic} for details on the effective manipulation of algebraic numbers).

Recall every matrix $A \in \RR^{d \times d}$ can be written in the form $A = QJQ^{-1}$ where $Q$ is nonsingular and $J$ is the Jordan canonical form of $A$. Let $J=J(A)$ denote the Jordan matrix of $A$. For the purpose of this paper, we use the \emph{real Jordan canonical form}.

\begin{lemma}[Real Jordan Canonical Form]\label{lemJordan}
For any matrix $A \in \RR^{d \times d}$, there exists a basis in which the matrix of $A$ is a quasi-diagonal matrix $J = \text{Diag}(J_1,\dots,J_N)$ where each block $J_i$ is of form
\[
\begin{pmatrix}
\lambda_t & 1 & 0 &\cdots & 0 \\
0 & \lambda_t & 1 & \cdots & 0 \\
\vdots & & \ddots & \ddots \\
0 & 0 & \cdots& \lambda_t & 1 \\
0 & 0 & \cdots & & \lambda_t
\end{pmatrix}
\text{ or }
\begin{pmatrix}
\alpha_l & \beta_l & 1 & 0 & 0 & \cdots & & & & 0\\
-\beta_l & \alpha_l & 0 & 1 & 0 & \cdots & & & & 0\\
 0 & 0 & \alpha_l & \beta_l & 1 & 0 \\
 \vdots & \vdots  & -\beta_l & \alpha_l & 0 & 1 \\
 &  & & & \alpha_l & \beta_l \\
 &  & & &-\beta_l & \alpha_l \\
 & & & & & & \ddots \\
 & & & & & & \alpha_l & \beta_l & 1 & 0 \\
 \vdots & \vdots & & & & & -\beta_l & \alpha_l & 0 & 1 \\
 0& 0&\cdots & & & &  &  & \alpha_l & \beta_l\\
 0& 0& \cdots & & & &  &  & -\beta_l & \alpha_l
\end{pmatrix},
\]
where the $\lambda_t$, $t=1,\dots,u$ are the real eigenvalues of $A$, and the $\lambda_l, \bar{\lambda}_l = \alpha_l \pm i\beta_l$, $l=1,\dots,s$, $\beta>0$ are the complex eigenvalues of $A$. The sizes of the blocks are determined by the elementary divisors of $A$.
\end{lemma}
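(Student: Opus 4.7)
The plan is to deduce the real Jordan canonical form from the standard (complex) Jordan canonical form, which I take as given. Viewing $A \in \RR^{d\times d}$ as an element of $\CC^{d\times d}$, we obtain a basis of $\CC^d$ in which $A$ is block diagonal with standard Jordan blocks $J_k(\mu) = \mu I_k + N_k$, where $N_k$ is the nilpotent shift. The task is then to convert this complex basis into a real one while preserving (a modification of) the block structure, block-by-block.

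Since $A$ has real entries, complex conjugation commutes with $A$, so if $\{v_1, \ldots, v_k\}$ is a Jordan chain for eigenvalue $\lambda = \alpha + i\beta$ with $\beta > 0$—meaning $Av_1 = \lambda v_1$ and $Av_j = \lambda v_j + v_{j-1}$ for $j \geq 2$—then $\{\bar v_1, \ldots, \bar v_k\}$ is a Jordan chain for $\bar\lambda$ of exactly the same length. Thus the complex Jordan blocks of $A$ pair up into matching blocks for $\lambda$ and $\bar\lambda$, together with unpaired blocks for the real eigenvalues. For each such conjugate pair I would write $v_j = u_j + i w_j$ with $u_j, w_j \in \RR^d$, and separate the Jordan-chain relation into real and imaginary parts:
\[ Au_j = \alpha u_j - \beta w_j + u_{j-1}, \qquad Aw_j = \beta u_j + \alpha w_j + w_{j-1}, \]
under the convention $u_0 = w_0 = 0$. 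Ordering the real vectors as $(u_1, w_1, u_2, w_2, \ldots, u_k, w_k)$ and reading off coefficients produces exactly the second block type displayed in the lemma, with $2 \times 2$ rotation-like diagonal blocks and $2\times 2$ identity superdiagonal blocks. Blocks for real eigenvalues need no modification.

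The step that warrants a moment's care is checking that $\{u_1, w_1, \ldots, u_k, w_k\}$ is $\RR$-linearly independent and spans an $A$-invariant real subspace of dimension $2k$. This follows from the invertible change of basis $u_j = (v_j + \bar v_j)/2$, $w_j = (v_j - \bar v_j)/(2i)$, together with the fact that generalized eigenvectors for the distinct eigenvalues $\lambda$ and $\bar\lambda$ are $\CC$-linearly independent. The sizes of the resulting real blocks then match the elementary divisors of $A$ over $\RR$, since each paired pair of complex blocks of dimension $k$ contributes one real block of dimension $2k$, while real-eigenvalue blocks retain their original size.

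This is a classical result, so there is no genuine obstacle; the main care needed is bookkeeping—ordering the real basis to produce the $2\times 2$ blocks in exactly the interleaved pattern displayed in the lemma statement (rather than, say, listing all the $u_j$ first and then all the $w_j$, which would yield a different but similar matrix form)—and verifying directly from the separated chain relations above that the sub- and superdiagonal entries appear precisely as claimed.
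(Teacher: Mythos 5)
Your proposal is correct. The paper itself gives no proof of Lemma~\ref{lemJordan}; it states the result as classical and defers to \cite{shilov2012linear}, so there is nothing to compare against beyond the standard textbook argument, which is exactly what you have reproduced: pair conjugate Jordan chains using the fact that conjugation commutes with the real matrix $A$, split $v_j = u_j + i w_j$, verify the separated chain relations $Au_j = \alpha u_j - \beta w_j + u_{j-1}$ and $Aw_j = \beta u_j + \alpha w_j + w_{j-1}$, and order the basis as $(u_1, w_1, \dots, u_k, w_k)$ to obtain the interleaved $2\times 2$ block pattern. Your linear-independence argument via the invertible change of variables $u_j = (v_j + \bar v_j)/2$, $w_j = (v_j - \bar v_j)/(2i)$ and the $\CC$-independence of generalized eigenvectors for the distinct eigenvalues $\lambda \neq \bar\lambda$ (guaranteed by $\beta > 0$) is sound. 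The only point one might state a touch more explicitly is that the Jordan basis can be \emph{chosen} so that the chains for $\bar\lambda$ are precisely the conjugates of the chains for $\lambda$; this is immediate since conjugation is an $\RR$-linear bijection between $\ker(A-\lambda I)^j$ and $\ker(A-\bar\lambda I)^j$ for every $j$, which is the pairing fact you already invoke.
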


As such, we have $A = QJ Q^{-1}$, and if $A$ is algebraic (has algebraic entries), then $J$ and $Q$ are also algebraic matrices and their entries can be computed from the entries of $A$. The usefulness of the real Jordan canonical form follows from the fact that in the case the characteristic polynomial of $A$ has complex roots, the Jordan blocks of $J$ continue to have real entries. For further details on the real Jordan canonical form including proof of Lemma \ref{lemJordan} and effective methods for computing $J$, consult \cite{shilov2012linear}.

For the remainder of this paper, we suppose $J$ is the real Jordan canonical form of $A$. In addition, suppose Jordan blocks $J_i$, $i=1,\dots,N$ are of dimension $D_i$, and associated with eigenvalues $\lambda_i$. Of course, there may be many Jordan blocks associated with a single eigenvalue $\lambda_i$, so $N \geq m$, but we often abuse notation and say $J_i$ associates to $\lambda_i$ to speak of the collection of Jordan blocks associated with $\lambda_i$ --- in the event more precise language is needed we use it. Moreover, we remark that the largest Jordan blocks associated with an eigenvalue are determined by the multiplicity of the eigenvalue in the minimal polynomial of $J$ --- this is a useful fact used throughout our paper.

Moreover, we generally assume initial points $x$ are \emph{non-trivial} (with respect to a matrix $A$). That is, a point $x$ is non-trivial if in the Jordan basis of the matrix $A$, $x$ has at least one-non-zero component with respect to each Jordan block composing $J(A)$. This assumption prevents a ``collapse" to lower order instances, and is not technically necessary in most cases --- if it is removed then the statements are often very similar if not ultimately identical. Nonetheless, we keep the assumption as it simplifies matters.

We reduce the Orbit Problem to a simpler version where $A$ is taken to be a block-diagonal Jordan matrix by way of the following

\begin{lemma}\label{lemmaReduce}
Deciding instances $(A, x, U)$ of the Orbit Problem reduces to deciding instances $(J, x', U')$ where $J=J(A)$ is the Jordan matrix of $A$.
\end{lemma}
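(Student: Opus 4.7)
The plan is to carry out the obvious conjugation: since by Lemma \ref{lemJordan} we have $A = QJQ^{-1}$ with $Q$ nonsingular and algebraic, the identity $A^n x = Q J^n Q^{-1} x$ holds for all $n$. Setting $x' := Q^{-1} x$ and letting $U'$ be the subspace whose basis consists of $Q^{-1}$ applied to a basis of $U$, we get the equivalence
\[
A^n x \in U \iff J^n x' \in U',
\]
since $Q^{-1}$ is a linear isomorphism of $\RR^d$ carrying $U$ onto $U'$ and $A^n x$ onto $J^n x'$. This gives the desired reduction in one line once the data $(J, x', U')$ is in hand.

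The remaining work is algorithmic: I would verify that $(J, x', U')$ is effectively computable from $(A, x, U)$. First, compute $Q$ and $J = J(A)$ together with $Q^{-1}$; this is standard (e.g., via the method of \cite{cai1994computing}) and produces a transformation matrix whose entries lie in a computable algebraic extension of $\QQ$, described via the effective algebraic number machinery referenced in Section \ref{secAlgebraic}. Then $x' = Q^{-1}x$ and the basis vectors of $U'$ are obtained by a single matrix multiplication. Next, I would observe that the structural hypotheses travel: the eigenvalue data of $J$ agrees with that of $A$, so the non-degeneracy assumption is preserved, and non-triviality of $x$ with respect to $A$ is defined in the Jordan basis, hence translates verbatim to a condition on the components of $x'$ with respect to the Jordan blocks of $J$.

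The only mildly delicate point is that $x'$ and $U'$ generally have algebraic (not rational) entries, so one must make sure that membership questions of the form ``is $J^n x' \in U'$?'' remain effective in this enlarged setting. This is handled by recalling from Section \ref{secAlgebraic} that arithmetic and linear-algebraic operations over a fixed number field are computable; in particular, given a basis of $U'$ in $\bar\QQ^d$ and a vector $J^n x' \in \bar\QQ^d$ with uniformly bounded field of definition, one can decide membership by solving a linear system over that field. I expect this bookkeeping to be the main obstacle, since conceptually the statement is just base change; the lemma is essentially a cleanup result that licenses working with $J$ throughout the rest of the paper.
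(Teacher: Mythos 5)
Your proof is correct and is essentially the same as the paper's: the paper's argument is exactly the one-line conjugation $A^k x \in U \iff J^k(Q^{-1}x) \in Q^{-1}U$ with $x' = Q^{-1}x$ and $U' = Q^{-1}U$. The additional effectivity bookkeeping you discuss is handled by the paper separately in Section \ref{secAlgebraic} rather than inside this lemma, but it does not change the substance.
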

\begin{proof}
Let $A = QJQ^{-1}$ with $J$ the Jordan canonical form of $A$, $Q$ nonsingular, and recall $A^k = QJ^kQ^{-1}$. Then $A^k x \in U$ for some $k \in \mathbb{N}$ implies
\[
QJ^kQ^{-1} x \in U, \text{ implying } J^k(Q^{-1}x) \in Q^{-1}U.
\]
Letting $Q^{-1}x = x'$ and $Q^{-1}U = U'$, it follows that $A^kx \in U$ if and only if $J^kx' \in U'$.
\end{proof}

As a consequence of Lemma \ref{lemmaReduce}, for the remainder of this paper when working with an instance $(A, x, U)$ of the Orbit Problem, we implicitly work with the equivalent instance $(J, x', U')$.

\subsection{Algebraic numbers and computing eigenvalues, eigenvectors}\label{secAlgebraic}

Let $\AA$ denote the field of algebraic numbers; a complex number $\alpha$ is \emph{algebraic} if it is a root of a single variable polynomial with integer coefficients. The matrix $A$ is taken to have rational entries, so all eigenvalues $\lambda$ of $A$ are in $\AA$. This paper requires we effectively perform various operations with the eigenvalues of $A$, which may not lie in $\QQ$. Fortunately, there is a sizable literature concerning computation with algebraic numbers (see \cite{AGAlgorithms,cohen2013course}, or the appendix of \cite{chonev2016complexity} for a useful review). Below, we summarize the basic properties of the effective manipulation of algebraic numbers pertaining to this paper.

For $\alpha \in \AA$, the \emph{defining polynomial} of $\alpha$, denoted $p_\alpha$, is the unique polynomial of least degree vanishing at $\alpha$, where the coefficients do not have common factors. Define the \emph{degree} and \emph{height} $H(p)$ of $\alpha$ to be the degree of $p$, and the maximum of the absolute values of $p$'s coefficients, respectively.
As such, standard finite encoding of an algebraic number $\alpha$ is as a tuple composed of its defining polynomial, and rational approximations of its real and imaginary parts of precision sufficient to distinguish $\alpha$ from the other roots of $p_\alpha$. That is, $\alpha \in \AA$ takes the representation $(p_\alpha, a, b, r) \in \ZZ[x] \times \QQ^3$, when $\alpha$ is the unique root of $p_\alpha$ inside a circle in $\CC$ of radius $r$, centered at $a+bi$. 
This representation of $\alpha$ is well-defined and computable in part thanks to a useful separation bound of Mignotte \cite{mignotte1982some}, which asserts that for distinct roots $\alpha, \beta$ of polynomial $p \in \ZZ[x]$
\[
|\alpha - \beta| > \frac{\sqrt{6}}{d^{(d+1)/2}H^{d-1}},
\]
where $d = deg(p)$ and $H = H(p)$. As a consequence, when $r$ is less than the root separation bound, we have equality checking between algebraic numbers. Given distinct $\alpha, \beta \in \AA$ these are roots of $p_\alpha p_\beta$, which are of degree at most $deg(\alpha)+deg(\beta)$, and of height at most $H(\alpha)H(\beta)$. Then one may compute $\alpha+\beta$, $\alpha\beta$, $1/\alpha$, $\overline{\alpha}$, $|\alpha|$, decide whether $\alpha > \beta$ for distinct algebraic $\alpha$ and $\beta$, and more.

For the remainder of this paper, we do not make explicit use of these methods, implicitly assuming they are used in the following procedures for deciding instances of the higher-dimensional Orbit Problem. Indeed, it should be clear from the algorithms given in the sequel that the we never leave the field of algebraic numbers. Although, the techniques of this paper do not in fact require the precision the effective manipulation of algebraic numbers provides: all our techniques are quite amenable to the finite-precision setting. So long as precision can be increased as needed, for our purposes a finite approximation of a number paired with an error bound more than suffices. Nonetheless, for the sake of clarity we opt to compute with algebraic numbers as discussed above.

As a consequence of these observations, we see that we have effective techniques for computing the eigenvalues and eigenvectors of rational matrices $A$.

\subsection{Linear recurrence sequences}\label{secLinearRecurrence}

We now consider basic aspects of linear recurrence sequences. For a rich treatment see \cite{everest2003recurrence}. A \emph{linear recurrence sequence (LRS)} over $\QQ$ is an infinite sequence $\{u_n\}_{n=0}^\infty$ of terms in $\QQ$ satisfying the recurrence relation
\[
u_{n+d} = a_{d-1}u_{n+d-1} + \cdots + a_0 u_n
\]
where $a_0,\dots,a_{d-1} \in \QQ$, with $a_0 \neq 0$ and $u_j \neq 0$ for at least one $j$ in the range $0 \leq j \leq d-1$. We say such an LRS has \emph{order $d$}. We call the $a_0,\dots,a_{d-1}$ the \emph{coefficients} of the sequence $\{u_n\}$ and the \emph{initial terms} of $\{u_n\}$ are $u_0,\dots,u_{d-1}$. The \emph{characteristic polynomial} of the sequence $\{u_n\}$ is
\[
x^d - a_{d-1}x^{d-1} - \cdots - a_0 = \prod_{i=1}^k(x - \lambda_i)^{m_i},
\]
with the $\lambda_1,\dots,\lambda_k$ called the \emph{characteristic roots} of the sequence  $\{u_n\}$. 

We call the sequence \emph{simple} if $k=d$, so $m_1 = \cdots =m_k = 1$, and \emph{non-degenerate} if $\lambda_i/\lambda_j$ is not a root of unity for any distinct $i, j$. The study of arbitrary LRS can be reduced effectively to that of non-degenerate LRS by partitioning the original LRS into finitely many non-degenerate subsequences.

Given an LRS, define
\[
A =
\begin{pmatrix}
a_{d-1} & a_{d-2} & \cdots & a_{1} & a_0 \\
1 & 0 & \cdots & 0 & 0\\
0 & 1 & \cdots & 0 & 0 \\
\vdots & \vdots & \ddots & \vdots & \vdots \\
0 & 0 & \cdots & 1 & 0
\end{pmatrix}
\]
to be the companion matrix of the LRS, and take the vector $x \in \QQ^{d}$ as $(u_{d-1}, \dots, u_0)^T$ of initial terms of the LRS. Then iteration of $A$ over $x$ acts as a ``shift" on the entries of $x$, shifting in the next term of the LRS and dropping the oldest term.
In addition, the characteristic polynomial of the LRS is the characteristic polynomial of $A$, and the characteristic roots of the LRS are the eigenvalues of $A$.

\subsection{Angles between flats}\label{secAnglesBetweenFlats}

The algorithm presented in this paper requires the computation of angles between Euclidean subspaces. The subject of computing angles between Euclidean subspaces --- also known as computing angles between \emph{flats} --- is a well-studied area and there are many effective and efficient techniques for doing so. See \cite{BFb0062107,gunawan2005formula,bjorck1973numerical,jiang1996angles} for readable introductions to the subject, and a variety of effective techniques for computing angles between subspaces.

Let $U, W \subset \RR^d$ be $k$ and $l$ dimensional subspaces, respectively, presented by orthogonal rational basis $\{u_1,\dots,u_k\}$ and $\{w_1,\dots,w_l\}$. Note any linearly independent set of rational vectors can be effectively transformed into an orthogonal set of rational vectors via the Gram-Schmidt Process (before normalization). If instead it is desired that the basis be orthonormal, then the entries of the vectors will be algebraic after the renormalization step.

In this paper we are primarily concerned with determining the minimal angle $0 \leq \theta_1 \leq \pi/2$ between two subspaces, defined as the least angle between any pair of non-zero vectors from $U$ and $W$. This is the \emph{first principal angle} between subspaces, whose variational characterization is

\begin{equation}\label{principleAngle}
\theta = \min\left\{ \arccos\left( \frac{|\langle u, w\rangle|}{\|u\|~ \|w\|}\right) : u \in U, w \in W, \text{ and } u, w \neq 0\right\},
\end{equation}

where $\langle \cdot,\cdot\rangle$ is the standard inner (dot) product. We refer to the quantity expressed above in Equation \ref{principleAngle} as the \emph{minimal angle between subspaces $U$ and $W$}. We always assume $U$ and $W$ have trivial intersection $U \cap W = \{0\}$, so that the minimal angle is well-defined and non-zero.

It is well-known that the variational characterization of singular values implies a variational characterization of the angles between subspaces \cite{golub2013matrix}. Indeed, as a consequence of the variational characterization of the singular value decomposition of a $m \times n$ matrix $A$, we see that
for $x \in \CC^m$, $y \in \CC^n$, then
\[
\sigma_1 = \max_{x \in \CC^m, y \in \CC^n} \frac{|x^*Ay|}{\|x\|\ \|y\|},
\]
where $x^*$ is the conjugate transpose and $\sigma_1$ is the first singular value of $A$.
Dropping the $\arccos$ term in Equation \ref{principleAngle}, we obtain a new problem, namely maximizing the quantity
\begin{equation}\label{eqMaximize}
\sigma = \max\left\{ \frac{|\langle u, w\rangle|}{\|u\|\ \|w\|} : u \in U, w \in W\right\}.
\end{equation}
Note then that $\sigma \leq 1$. The pair of vectors $u, v$ maximizing the above quantity can be used to obtain the minimal angle between the subspaces $U$ and $W$.

Combining the above observations and restricting to the reals, take matrices $X \in \mathbb{R}^{n \times k}$ and $Y \in \mathbb{R}^{n \times l}$ to have columns consisting of orthonormal bases $\{u_1,\dots,u_k\}$ and $\{w_1,\dots,w_l\}$ of $U$ and $W$, respectively. The optimization problem in Equation \ref{eqMaximize} can then be written as 
\[
\sigma = \max_{x \in \mathbb{R}^k, y \in \mathbb{R}^l} \frac{x^{\top} X^{\top} Y y}{\|x\|\ \|y\|},
\]
the solution to which is the largest singular value of $X^{\top} Y$, by the variational characterization of the singular value decomposition.
For more details and proof of the fact that cosines of principal angles come from the singular value decomposition see \cite{knyazev2002principal}, or refer to Chapter 5 of \cite{meyer2023matrix} for more details on the variational characterization of singular value decomposition, and additional effective methods for computing the minimal angle between subspaces.

The value $\theta = \arccos \sigma$ is the minimal angle between subspaces $U, W$. However, we need not learn $\theta$ --- in fact $\theta$ need not be an algebraic number due to the $\arccos$. Rather, it is sufficient for the purposes of this paper to instead work only with the cosine of the minimal angle, namely $\sigma$. The algorithm presented in this paper only requires that we compare the minimal angle between subspaces, which we can accomplish equipped just with $\sigma$: supposing $\sigma'<\sigma$, we know $\arccos(\sigma') > \arccos(\sigma)$, and hence the angle corresponding to $\sigma$ is smaller than the angle corresponding to $\sigma'$. In fact it  is sufficient not to work with $\sigma$, but $\sigma^2$, which is the largest of  the positive eigenvalues obtained in the singular value decomposition.

The singular value $\sigma$ is merely the square root of an eigenvalue computed in the singular value decomposition (or just the eigenvalue in the $\sigma^2$ case), and is hence algebraic, admitting a finite representation as per Section \ref{secAlgebraic}. As a consequence, we may effectively compare the minimal angles between different pairs of Euclidean subspaces described by rational basis vectors.

For the remainder of this paper, when we speak of angles between subspaces, we are implicitly speaking of the singular value $\sigma$ expressed above. Define the function
\begin{equation}\label{eqGamma}
\Gamma(U, W) = \max_{x \in \mathbb{R}^k, y \in \mathbb{R}^l} \frac{x^{\top} X^{\top} Y y}{\|x\|\ \|y\|} = \sigma,
\end{equation}
to be the function computing the minimal angle between subspaces $U$ and $W$ as above.
If one or both of $U$ and $W$ are one-dimensional subspaces, i.e. $U = \text{span}(x)$ for $x \in \mathbb{Q}^d$, then $\Gamma(x, W)$ is understood to mean $\Gamma(U, W)$ to allow for more natural notation.

Whether $\Gamma(U, W)$ returns $\sigma$ or $\sigma^2$ does not matter for the purposes of this paper --- intuitively $\Gamma(U, W)$ computes the minimal angle between the subspaces ($\arccos \Gamma$ is the actual minimal cosine angle), and everything remains computable: the columns of $U$ and $W$ will always be composed of rational (or possibly algebraic) numbers, and $\Gamma$ is effectively computable with such input.

\subsection{The Orbit Problem in real projective space}\label{secAngleEvolution}

Real projective space $\RR\PP^{d-1}$ is defined to be the quotient of $\RR^{d}\setminus \{0\}$ by the equivalence relation $x \sim \alpha x$ where $\alpha \neq 0$ is real and $x \in \RR^d$. Clearly, it suffices to consider only vectors $x$ with Euclidean norm $\|x \| = 1$. Hence, geometrically, projective space is the space of lines through the origin, or alternatively the space obtained by identifying opposite points on the unit sphere $S^{d-1}$. The projective linear group $PGL(\RR^d) = GL(\RR^d)/\alpha\cdot\text{Id}$, $\alpha \neq 0$, is the induced action of the general linear group on projective space. The induced action of non-invertible square matrices over elements of projective space is similarly defined.

Write $\PP:\RR^d\setminus \{0\} \rightarrow \RR\PP^{d-1}$ for the projection of elements from Euclidean space to real projective space, denoting elements of $\RR\PP^{d-1}$ by $p = \PP x$, where $0 \neq x \in \RR^d$. 
A metric on $\RR\PP^{d-1}$ is given by
\[
d(\PP x, \PP y) = \min\left(\left|\left| \frac{x}{\|x\|} - \frac{y}{\|y\|}\right|\right|,
\left|\left|\frac{x}{\|x\|} - \frac{-y}{\|y\|} \right|\right| \right).
\]
In the case that $W$ is a subspace of $\RR^d$, define the distance between a point $x$ and the subspace to be
\[
d(x, W \cap S^{d-1}) = \inf_{y \in W \cap S^{d-1}}\|x-y\| = \min_{y \in W} d(\PP x, \PP y) = d(\PP x, \PP W).
\]
Observe that these metrics are intimately related to the function $\Gamma$ defined in the previous subsection, as expressed in the following trivial, albiet useful lemma.

\begin{lemma}\label{lemAngleConverge}
Let $U$ be a subspace of $\QQ^d$, $x \in \QQ^d$, and $A \in \QQ^{d \times d}$.
Suppose $\Gamma(A^nx, U)\rightarrow 1$ as $n \rightarrow \infty$. Then
\[
\lim_{n\rightarrow \infty}d(\PP A^nx, \PP U) = 0.
\]
\end{lemma}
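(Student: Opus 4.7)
The plan is to unpack the definitions on both sides of the implication and observe that the conclusion is essentially a standard identity relating cosine of angle to chord length on the unit sphere.

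First I would fix notation: write $y_n = A^n x / \|A^n x\|$ for the unit representative in $\mathbb{R}^d$, and for each $n$ choose a unit vector $u_n \in U \cap S^{d-1}$ realizing the minimum in the variational characterization of $\Gamma$, so that $|\langle y_n, u_n\rangle| = \Gamma(A^n x, U)$. Such a $u_n$ exists because the unit sphere in $U$ is compact and the objective is continuous. By hypothesis $|\langle y_n, u_n\rangle| \to 1$.

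Next I would choose a sign $\epsilon_n \in \{+1, -1\}$ so that $\langle y_n, \epsilon_n u_n\rangle = |\langle y_n, u_n\rangle|$, and compute
\[
\|y_n - \epsilon_n u_n\|^2 = 2 - 2\langle y_n, \epsilon_n u_n\rangle = 2 - 2\Gamma(A^n x, U) \longrightarrow 0.
\]
Consequently $\min(\|y_n - u_n\|, \|y_n + u_n\|) \to 0$, which by the definition of the projective metric gives $d(\mathbb{P} A^n x, \mathbb{P} u_n) \to 0$. Since $u_n \in U$, we have $d(\mathbb{P} A^n x, \mathbb{P} U) \leq d(\mathbb{P} A^n x, \mathbb{P} u_n)$, and the claim follows.

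There is no real obstacle here; the only minor technical care is the distinction between $\Gamma$ returning $\sigma$ versus $\sigma^2$ (as the paper notes either is acceptable). In the $\sigma^2$ case the chord-length identity still yields $\|y_n - \epsilon_n u_n\|^2 = 2 - 2\sqrt{\Gamma(A^n x, U)} \to 0$, so the conclusion is unchanged. The lemma is essentially a repackaging of the fact that the map $\theta \mapsto 2\sin(\theta/2)$ is continuous at $\theta = 0$.
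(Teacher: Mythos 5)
Your proposal is correct and follows essentially the same route as the paper's proof: unpack the variational characterization of $\Gamma$ and convert convergence of the cosine into convergence of the chord length on the unit sphere. In fact your version is more complete than the paper's, which asserts the one-dimensional case is ``clear'' and the general case ``immediate''; your explicit choice of the maximizing unit vector $u_n \in U \cap S^{d-1}$, the sign $\epsilon_n$, and the identity $\|y_n - \epsilon_n u_n\|^2 = 2 - 2\,|\langle y_n, u_n\rangle|$ supplies exactly the computation the paper elides (the only nitpick being that $u_n$ realizes the \emph{maximum} of the inner-product quotient, i.e.\ the minimal angle, not the minimum).
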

\begin{proof}
We show the lemma for the case that $U$ is a one-dimensional subspace spanned by $y \in \QQ^d$, which is easily generalized to higher-dimensional $U$. 

We have
\[
\Gamma(A^nx, U) = \frac{|A^nx \cdot y|}{\|A^nx\|\ \|y\|}, \ n \in \NN.
\]
Then clearly if
\[
\frac{|A^nx \cdot y|}{\|A^nx\|\ \|y\|} \rightarrow 1 \text{ as } n \rightarrow \infty,
\]
we have
\[
d(\PP A^nx, \PP y) = \min\left(\left|\left| \frac{A^nx}{\|A^nx\|} - \frac{y}{\|y\|}\right|\right|,
\left|\left|\frac{A^nx}{\|A^nx\|} - \frac{-y}{\|y\|} \right|\right| \right) \rightarrow 0 \text{ as } n \rightarrow \infty.
\]
This readily generalizes to the case when $U$ is of higher-dimension, as we sketch below.

The function $\Gamma$ returns the cosine of the minimal angle between two subspaces. Hence, recalling the variational characterization of the minimal angle in Equation \ref{principleAngle} after dropping the $\arccos$, we have
\[
\Gamma(U, W) = \sigma = \max\left\{\left( \frac{|\langle u, w\rangle|}{\|u\|~ \|w\|}\right) : u \in U, w \in W\right\}.
\]
So, abusing notation and letting $A^nx$ label the subspace spanned by $A^nx$, we have
\[
\Gamma(A^nx, W) = \sigma = \max\left\{\left( \frac{|\langle A^nx, w\rangle|}{\|A^nx\|~ \|w\|}\right) : w \in W\right\}.
\] 
Combining this with the fact that
\[
d(\PP A^nx, \PP W) = \inf_{y \in W \cap S^{d-1}}\|A^nx-y\| = \min_{y \in W} d(\PP A^nx, \PP y),
\]
the general statement is immediate upon considering the proof of the case when $U$ is one-dimensional, as shown above.
\end{proof}

Given $x \in \RR^d$, let $X$ denote the one-dimensional subspace (line) spanned by $x$. Then, for a matrix $A \in \RR^{d \times d}$ and subspace $U$, we see that for every $n \in \NN$, $A^nx \in U$ if and only if $A^nX \subseteq U$, where $A^nX$ is the subspace spanned by the vector $A^nx$. Hence, given an instance $(A, x, U)$ of the Orbit Problem, it is immediate that there is an $n \in \NN$ such that $A^nx \in U$ if and only if $\PP A^nx \in \PP U$.

These considerations lead us to the simple yet powerful conclusion that \emph{by projecting the Orbit Problem into projective space we still retain the needed information to decide the Orbit Problem --- indeed the formulations are equivalent --- and yet in projective space the asymptotic behavior of orbits is far more manageable than in Euclidean space.}  In projective space, we have asymptotically stable (converging) orbits, while no-such behavior is expressed in Euclidean space. We abuse this structure in the sequel, applying the basic insight that if an orbit of a dynamical system monotonically converges to some set, and the attracting set is separated from the target set by some $\epsilon > 0$, then by running the system for finite time we can decide whether the orbit intersects the target set. See \cite{colonius2014dynamical} for details on linear dynamical systems over real projective space.

\subsection{Convergence of orbits toward attracting subspaces in $\mathbb{RP}^{d-1}$}

In this section we establish the lemmas underlying Theorem \ref{thmOne}, and hence all the other results in this paper.
The following lemmas and the arguments used are technical, which perhaps obfuscates the underlying intuition and central concepts which are in-fact quite simple.
To this end, we take a moment to present the intuition behind the following results, which should aid in clarifying one's understanding.

The central concept is this. Let
\[
\Lambda = \text{Diag}(x_1,x_2,\dots,x_d)
\]
be a diagonal matrix with $x_1>x_2>\cdots >x_d\geq 1$. If $y \in \mathbb{Q}^d$, then iteration of $\Lambda$ over $y$ sends the first entry of $y$ to $\infty$ faster than the others since $x_1 > x_i$, $i\neq 1$, i.e.
\[
\lim_{n\rightarrow \infty}\frac{x_i^n}{x_1^n} = 0 \text{ for } i\neq 1.
\]
Hence, $\Gamma(\Lambda^ny, e_1)\rightarrow 1$ as $n\rightarrow \infty$ where $e_1$ is the first standard basis vector. In particular, following Lemma \ref{lemAngleConverge},
\[
\lim_{n\rightarrow \infty}d(\PP \Lambda^nx, \PP e_1) = 0.
\]
This observation underlies the idea behind Theorem \ref{thmOne}: if the orbit of a point in projective space approaches an attracting subspace that does not intersect with the target subspace, then eventually the orbit must be bound away form the target subspace for all time, so we have only a finite number of iterations for which intersection must be checked.

The following lemmas deal with a generalization of the above example, where $\Lambda$ is not a real diagonal matrix but rather a matrix in Jordan normal form. However, one may just as easily abstractly treat the $x_i$ composing $\Lambda$ not as real numbers, but rather as Jordan blocks. Then the same principle holds: a certain Jordan block or collection of identical Jordan blocks will ``dominate" the asymptotics, and hence $\Gamma(\Lambda^ny, E)\rightarrow 1$ as $n\rightarrow \infty$. The critical difference being that $E$ may not be one-dimensional, but is a direct sum of the one or two-dimensional subspaces invariant under each real or complex Jordan block dominating the asymptotics, respectively.

Equipped with this intuition, we begin by stating the following well-known result about Jordan blocks, easily proved by induction.

\begin{lemma}[Powers of Jordan blocks]\label{lemJordanPowers}
For a matrix $K = \text{Diag}(J_1,\dots,J_N)$ in Jordan canonical form, its $n$th power is given by $J^n = \text{Diag}(J_1^n,\dots,J_N^n)$. Where, for real eigenvalues $\lambda_i$ we have
\[
J_i^n = 
\begin{pmatrix}
\lambda_i^n & n\lambda_i^{n-1} & {n \choose 2}\lambda_i^{n-2} & \cdots & {n \choose D_i-1} \lambda_i^{n-(D_i-1)} \\
0 & \lambda_i^n & n\lambda_i^{n-1} & \cdots & {n \choose D_i-2} \lambda_i^{n-(D_i-2)} \\
\vdots & & \ddots & \ddots & \vdots \\
0 & 0 & \cdots & \lambda_i^n & n\lambda_i^{n-1} \\
0 & 0 & \cdots & 0 & \lambda_i^n
\end{pmatrix}
\]
where $D_i$ is the dimension of the block $J_i$, and ${n \choose k} = 0$ if $n < k$. And when $J_i$ is a Jordan block associated with a complex conjugate eigenvalue pair expressed in the real canonical form as in Lemma \ref{lemJordan}, letting
\[
R_i = \begin{pmatrix}
\alpha_i & \beta_i \\
-\beta_i & \alpha_i
\end{pmatrix},
\]
we have
\[
J_i^n = 
\begin{pmatrix}
R_i^n & nR_i^{n-1} & {n \choose 2}R_i^{n-2} & \cdots & {n \choose D_i-1} R_i^{n-(D_i-1)} \\
0 & R_i^n & nR_i^{n-1} & \cdots & {n \choose D_i-2} R_i^{n-(D_i-2)} \\
\vdots & & \ddots & \ddots & \vdots \\
0 & 0 & \cdots & R_i^n & nR_i^{n-1} \\
0 & 0 & \cdots & 0 & R_i^n
\end{pmatrix}
\]
with $J_i$ of dimension $2D_i$.
\end{lemma}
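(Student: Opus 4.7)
The plan is straightforward: reduce to a single block, then use the standard binomial-theorem trick for powers of $\lambda I + N$, where $N$ is a nilpotent shift. I would split the argument into three short steps.

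\textbf{Step 1 (Block decomposition).} Since $J$ is block diagonal, a direct computation shows that $J^n = \mathrm{Diag}(J_1^n,\dots,J_N^n)$. This is really just the observation that multiplying two block-diagonal matrices with compatible block structure multiplies the blocks entrywise, and can be proved by a one-line induction on $n$. This reduces the lemma to computing $J_i^n$ for a single block.

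\textbf{Step 2 (Real blocks).} For a real eigenvalue block $J_i$ of size $D_i$, write $J_i = \lambda_i I + N$, where $N$ is the nilpotent shift matrix with $1$'s on the first superdiagonal and zeros elsewhere. Since $I$ and $N$ commute, the binomial theorem gives
\[
J_i^n = (\lambda_i I + N)^n = \sum_{k=0}^{D_i - 1}\binom{n}{k}\lambda_i^{\,n-k} N^k,
\]
where the sum truncates at $k = D_i - 1$ because $N^{D_i} = 0$. A direct computation shows that $N^k$ has $1$'s on the $k$-th superdiagonal and zeros elsewhere, so the $(j, j+k)$-entry of $J_i^n$ equals $\binom{n}{k}\lambda_i^{n-k}$, exactly as claimed (with the convention $\binom{n}{k} = 0$ for $n < k$ accounting for the case of small $n$).

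\textbf{Step 3 (Complex blocks).} For a complex conjugate block, the same decomposition works but at the level of $2\times 2$ sub-blocks: write $J_i = \widetilde{R}_i + \widetilde{N}$, where $\widetilde{R}_i$ is block-diagonal with $R_i$ repeated $D_i$ times, and $\widetilde{N}$ is a block-nilpotent shift with $2\times 2$ identity blocks on the first block-superdiagonal. The key observation is that $\widetilde{R}_i$ and $\widetilde{N}$ commute, because $R_i$ commutes with the $2 \times 2$ identity appearing inside $\widetilde{N}$. Thus the binomial theorem again applies, and since $\widetilde{N}^{D_i} = 0$ and $\widetilde{R}_i^m$ is block-diagonal with $R_i^m$ along the diagonal, we obtain exactly the claimed expression with $R_i^{n-k}$ replacing $\lambda_i^{n-k}$.

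\textbf{Main obstacle.} There is no real obstacle; the only thing to be careful about is verifying in Step 3 that the block-shift $\widetilde{N}$ commutes with $\widetilde{R}_i$, which relies on the fact that the entries of $\widetilde{N}$ are scalar multiples of the $2\times 2$ identity and hence commute with any $2\times 2$ matrix placed in the same block position. Once this is observed, Step 3 is formally identical to Step 2, and the whole lemma follows from one clean binomial expansion in each case.
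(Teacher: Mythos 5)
Your proof is correct and complete; the paper itself offers no written proof (it states the lemma as a well-known fact ``easily proved by induction''), and your binomial expansion of $\lambda_i I + N$ (resp.\ $\widetilde{R}_i + \widetilde{N}$) is precisely the standard closed form of that induction. The one point of care you flag---that the block shift $\widetilde{N}$ commutes with $\widetilde{R}_i$ because its nonzero blocks are identity matrices---is exactly the right thing to check, so nothing is missing.
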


Note that that when $\lambda = \alpha \pm i\beta$ is complex $|\lambda| = \sqrt{\alpha^2 + \beta^2}$, hence with $\theta \in [0, 2\pi)$ determined by $\cos \theta = \alpha / \sqrt{\alpha^2 + \beta^2}$, we can rewrite the matrix $R$ above as
\[
|\lambda|R \text{ with } R = R_i = \begin{pmatrix}
\cos\theta & \sin\theta \\
-\sin\theta & \cos\theta
\end{pmatrix}.
\]
Hence $R$ describes a rotation by the angle $\theta$ followed by multiplication by $|\lambda|$. Letting $H$ denote the $2D_i$ dimensional nilpotent matrix of ones just above the principal diagonal, for Jordan block $J_i$ associated with a complex conjugate eigenvalue pair, we have
\begin{equation}\label{complexJordanIterate}
J_i^nx = \sum_{i=0}^{D_i-1}{n \choose i}|\lambda|^{n-i}\tilde{R}^{n-i}H^i x
\end{equation}
where $\tilde{R}$ is the block diagonal matrix with blocks $R$, and $n\geq D_i-2$.

The following lemmas establish the asymptotic behavior of a non-zero vector under iteration of Jordan blocks. Tiwari and Braverman in \cite{tiwari2004termination,braverman2006termination} touch upon a similar idea,  although the outcome and use is different.

We also note that the following Lemmas \ref{lemJordanBlockLimitOne}, \ref{lemJordanBlockLimitTwo}, and \ref{lemJordanBlockLimitThree} are essentially folklore in dynamical systems; the asymptotic behavior of real projective transformations is well understood --- see e.g. \cite{colonius2014dynamical,ayala2014topological,KUIPER197613,he2017topological} which all contain relevant background and results related to the Lemmas below. However, we write the following Lemmas in a way that corresponds their computable nature, thereby translating them into a form suitable for the problem this paper is attacking.

\begin{lemma}\label{lemJordanBlockLimitOne}
Let $J_i$ label a Jordan block of form
\[
J_i = \begin{pmatrix}
D & I & 0 &\cdots & 0 \\
0 & D & I & \cdots & 0 \\
\vdots & & \ddots & \ddots \\
0 & 0 & \cdots& D & I \\
0 & 0 & \cdots & & D
\end{pmatrix}
\]
with respect to eigenvalue $\lambda_i$, $|\lambda_i| > 0$, where $D = \lambda_i$ and $I = 1$ if $\lambda_i$ is real. When $\lambda_i,\overline{\lambda_i}$ are complex set
$D = \begin{pmatrix}
\alpha_i & \beta_i \\
-\beta_i & \alpha_i
\end{pmatrix}$,
and let $I$ be the two-by-two identity matrix.

Take $J_i$ to have dimension $\delta$, and restrict its action to $\QQ^{\delta}$, letting $x \neq 0 \in \QQ^{\delta}$.
Let $P$ denote the one (two) dimensional subspace invariant under $J_i$ when $\lambda_i$ is real (complex).
Then
\[
\lim_{n\rightarrow \infty} d(\PP J_i^nx, \PP P) = 0,
\]
where $d$ is the metric for real projective space defined previously.
\end{lemma}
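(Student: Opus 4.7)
The plan is to handle the real and complex eigenvalue cases separately, using Lemma \ref{lemJordanPowers} in each to make the asymptotics of $J_i^n x$ completely explicit, and then to argue that the component of the orbit lying in the invariant subspace $P$ eventually dominates every other component. Once this dominance is established in Euclidean norm, Lemma \ref{lemAngleConverge} (or direct manipulation of the projective metric) will finish the job.

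In the real case, $J_i^n$ is upper triangular with $(k,k+j)$-entry $\binom{n}{j}\lambda_i^{n-j}$. Let $m \in \{1,\dots,\delta\}$ be the largest index with $x_m \neq 0$; this exists because $x \neq 0$. A direct expansion gives
\[
(J_i^n x)_k \;=\; \sum_{j=0}^{m-k}\binom{n}{j}\lambda_i^{n-j}x_{k+j}, \qquad 1 \leq k \leq m,
\]
and $(J_i^n x)_k = 0$ for $k>m$. For each $k$ the fastest-growing summand is $\binom{n}{m-k}\lambda_i^{n-(m-k)}x_m$, of order $n^{m-k}|\lambda_i|^{n-(m-k)}$. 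Comparing the cases $k=1$ and $k\geq 2$, the ratio is of order $n/|\lambda_i|^{k-1}\to\infty$, so the first coordinate dominates every other coordinate. Normalizing, $J_i^n x/\|J_i^n x\|\to \pm e_1$, and since $P=\operatorname{span}(e_1)$, the projective convergence follows from Lemma \ref{lemAngleConverge}.

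In the complex case I would use equation (\ref{complexJordanIterate}) directly. Let $m$ be the largest integer in $\{0,1,\dots,D_i-1\}$ with $H^m x \neq 0$; this is the index of the highest-growth term that actually contributes to the sum. The crucial observation is that by maximality of $m$ we have $H(H^m x)=0$, so $H^m x\in \ker H$; by the block structure of $J_i$, $\ker H$ is exactly the invariant two-dimensional subspace $P$. Moreover $\tilde{R}$ is block-diagonal with $2\times 2$ rotation blocks, hence an orthogonal transformation that preserves $P$, so $\tilde{R}^{n-m}H^m x$ lies in $P$ for every $n$ and has constant norm $\|H^m x\|>0$. Dividing $J_i^n x$ through by the leading factor $\binom{n}{m}|\lambda_i|^{n-m}$, each $j<m$ term is smaller by a factor $\binom{n}{j}/\binom{n}{m}$ (times the bounded isometry $\tilde{R}^{n-j}$), which tends to $0$. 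So the normalized orbit is a unit vector in $P$ plus a vanishing remainder, which again yields $d(\PP J_i^n x, \PP P)\to 0$.

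The main obstacle is the complex case: unlike in the real case, $J_i^n x$ viewed projectively need not converge to a single point, since $\tilde{R}^{n-m}$ keeps rotating the dominant term within $P$. The reconciling point is that the lemma only demands convergence \emph{to the subspace} $P$, not to a single point of $\PP P$. The invariance of $P$ under $\tilde{R}$, together with the fact that $\tilde{R}$ is an isometry (so the dominant term's norm does not collapse), is exactly what makes the rotation harmless and the distance to $\PP P$ still tend to zero.
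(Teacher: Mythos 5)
Your proof is correct and takes essentially the same route as the paper's: expand $J_i^n x$ explicitly via Lemma \ref{lemJordanPowers}, observe that the term indexed by the last nonzero component of $x$ carries the top polynomial growth and hence dominates, and conclude with Lemma \ref{lemAngleConverge}; if anything yours is the more careful version, since the paper simply divides by the first component without tracking which coefficient is nonzero, and it glosses over the rotation within $P$ in the complex case that you explicitly reconcile. The one point to watch is that your complex-case argument needs $H$ in equation (\ref{complexJordanIterate}) to be read as the block-nilpotent part $J_i - |\lambda_i|\tilde{R}$ (whose kernel is indeed the two-dimensional $P$ and which commutes with $\tilde{R}$), not literally the matrix of ones on the superdiagonal, whose kernel is only one-dimensional.
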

\begin{proof}
By Lemma \ref{lemAngleConverge}, it suffices to show that $\Gamma(J_i^n x, P) \rightarrow 1$ as $n \rightarrow \infty$.
Recall that $\arccos(\Gamma(J_i^n x, P))$ is the (minimal) cosine angle between subspaces $J_i^nx$ and $P$.
Using Lemma \ref{lemJordanPowers} we have
\[
J_i^n = \begin{pmatrix}
D^n & nD^{n-1} & {n \choose 2}D^{n-2} & \cdots & {n \choose \delta-1} D^{n-(\delta-1)} \\
0 & D^n & nD^{n-1} & \cdots & {n \choose \delta-2} D^{n-(\delta-2)} \\
\vdots & & \ddots & \ddots & \vdots \\
0 & 0 & \cdots & D^n & nD^{n-1} \\
0 & 0 & \cdots & 0 & D^n
\end{pmatrix}
\]
and thus
\[
J_i^nx
=
\begin{pmatrix}
D^nx_1 + \cdots + {n \choose \delta-1} D^{n-(\delta-1)}x_\delta\\
D^nx_2 + \cdots + {n \choose \delta-2} D^{n-(\delta-2)}x_\delta\\
\vdots \\
D^nx_{\delta-1} + nD^{n-1}x_\delta\\
D^nx_\delta
\end{pmatrix},
\]
where $x = (x_1,\dots,x_\delta)^\top$, and the components $x_i$ are either $1\times 1$ or $2 \times 1$ depending on whether $D$ is $1 \times 1$ or $2 \times 2$.

Dividing the components of $J_i^nx$ by $(J_i^nx)_1$ (the first component), and taking a limit (element-wise) we see that
\[
\lim_{n\rightarrow \infty} \left|\frac{J_i^n(x)}{(J_i^nx)_1}\right| = (1,0,\dots,0)^\top,
\]
as a consequence of the fact that the first term in the vector (corresponding to the subspace $P$) has the highest polynomial growth. This limit also indicates that when $D$ is $2 \times 2$, the vector approaches a two-dimensional subspace.

It is then immediate that
\[
\Gamma(J_i^n x, P) \rightarrow 1 \text{ as } n \rightarrow \infty,
\]
from the definition of $\Gamma$, and the statement follows.
\end{proof}

\begin{remark}
The statement of Lemma \ref{lemJordanBlockLimitOne} did not require $|\lambda_i|>1$, because it holds if $|\lambda_i|=1$ or $|\lambda_i| < 1$: when $|\lambda_i|=1$, the rate of growth of the first term of the vector is still higher than the others, and when $|\lambda_i| < 1$, the rate of decline of the first component of the vector is slowest, and hence the statement still holds. Furthermore, the statement continues to hold independent of whether $\lambda_i$ is positive or negative --- in all cases the angle between the vector and the one or two-dimensional space invariant under $J_i$ approaches zero. Finally, we remark that so long as $x$ has at least one non-zero component the lemma holds: after a few iterations all the first component of $J_i^nx$ will be non-zero and the asymptotics kick in accordingly.
\end{remark}

With Lemma \ref{lemJordanBlockLimitOne} in hand, we obtain the following set of lemmas giving asymptotic behavior of orbits when there are multiple Jordan blocks.

The following lemma states that when there are many Jordan blocks all associated with eigenvectors of the same modulus, the largest block(s) dictate the asymptotic behavior of orbits.

\begin{lemma}\label{lemJordanBlockLimitTwo}
Let $J = \text{Diag}(J_1,\dots,J_N)$ be a block diagonal matrix of Jordan blocks, where each $J_i$ is associated with a real or complex eigenvalues $\lambda_i$, such that $|\lambda_1| = |\lambda_2| = \cdots = |\lambda_N|$. Let $D_i$ denote the dimension of $J_i$ when $\lambda_i$ is real, and half the dimension of $J_i$ when $\lambda_i$ is complex ($J_i$ is the block associated to a complex conjugate pair). Suppose $D_1 = \cdots =D_T > D_{T+1} \geq \cdots \geq D_N$, with $T \leq N$.

Then there is a subspace $P$ of dimension $k$, $1 \leq k \leq 2T$, determined by the eigenvalues and Jordan blocks, such that
\[
\lim_{n\rightarrow \infty} d(\PP J^nx, \PP P) = 0
\]
for any $x$ with at least one non-zero component with respect to the Jordan blocks comprising $J$.
\end{lemma}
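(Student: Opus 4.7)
The plan is to reduce to Lemma \ref{lemJordanBlockLimitOne} by analyzing each block of $J$ separately, identifying the contributions that dominate asymptotically, and then assembling the attracting subspace $P$ as the direct sum of the invariant subspaces of the dominant blocks. Since $J$ is block diagonal, I can decompose $x = (x^{(1)}, \ldots, x^{(N)})$ according to the block structure, and then $J^n x = (J_1^n x^{(1)}, \ldots, J_N^n x^{(N)})$, so the analysis really is componentwise.

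First, using Lemma \ref{lemJordanPowers}, I would read off the growth rate of each coordinate of $J_i^n x^{(i)}$. The non-triviality assumption guarantees that the ``last'' sub-component of $x^{(i)}$ (the one multiplied by the highest binomial coefficient) is non-zero, so the first entry of $J_i^n x^{(i)}$ grows precisely like $\binom{n}{D_i - 1}|\lambda_i|^{n - (D_i-1)} = \Theta(n^{D_i - 1}|\lambda|^n)$, where $|\lambda| = |\lambda_1| = \cdots = |\lambda_N|$. Because $D_1 = \cdots = D_T > D_{T+1} \geq \cdots \geq D_N$, every contribution from blocks $J_{T+1}, \ldots, J_N$ is of order $o(n^{D_1-1}|\lambda|^n)$, while each of the first $T$ dominant blocks contributes at exactly this rate.

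Next I would normalize $J^n x$ by $n^{D_1-1}|\lambda|^n$. By the previous paragraph, the coordinates in the non-dominant blocks vanish in the limit, so in projective space the orbit ``forgets'' those blocks. Restricting attention to each dominant block, the same coordinatewise estimate that drives Lemma \ref{lemJordanBlockLimitOne} shows that the normalized contribution of block $J_i$ concentrates on its leading invariant subspace $P_i$ --- one-dimensional if $\lambda_i \in \RR$, two-dimensional if $\lambda_i$ is complex. Setting $P = P_1 \oplus \cdots \oplus P_T$, whose dimension $k$ satisfies $T \leq k \leq 2T$, I conclude that the distance from the normalized $J^n x$ to $P \cap S^{d-1}$ tends to zero in $\RR^d$, and hence, by the same angle-to-distance translation used in Lemma \ref{lemAngleConverge}, that $d(\PP J^n x, \PP P) \to 0$.

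The main obstacle is the potential for oscillation or near-cancellation among the $T$ dominant blocks: when $\lambda_i$ is complex the leading-order contribution from $J_i$ rotates inside $P_i$, and when several dominant $\lambda_i$ share a common modulus their phases can interfere. I would handle this by establishing the two-sided bound $c_1 n^{D_1-1}|\lambda|^n \leq \|J^n x\| \leq c_2 n^{D_1-1}|\lambda|^n$ for constants $c_1, c_2 > 0$ and all sufficiently large $n$: the dominant $P_i$ sit in pairwise disjoint coordinate blocks of the Jordan basis, so the contributions from distinct dominant blocks add orthogonally in $\RR^d$ (no cross-block cancellation), and within each block the dominant term has non-vanishing magnitude by non-triviality. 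This orthogonality guarantees that after normalization the orbit neither collapses to $0$ nor escapes $P$, so the projective distance $d(\PP J^n x, \PP P)$ tends to zero even though the precise point of $\PP P$ approached by $\PP J^n x$ may keep fluctuating.
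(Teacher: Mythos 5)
Your overall strategy is the same as the paper's: exploit the block-diagonal structure to analyze $J^nx$ blockwise via Lemma \ref{lemJordanPowers}, observe that the blocks of largest dimension produce the highest polynomial growth rate $n^{D_1-1}|\lambda|^n$, normalize, and take $P$ to be the direct sum of the leading invariant subspaces of the dominant blocks. (The paper organizes this as two cases, $T=1$ and $T=N$, and then combines them; your single normalization argument is a cleaner packaging of the same idea.) Two places where you are actually more careful than the paper: you note that for complex dominant eigenvalues the normalized orbit rotates inside $P$ rather than converging to a fixed point of $\PP P$, whereas the paper writes the limit as a fixed vector $(c_1,0,\dots,1,\dots)^\top$; and your two-sided norm bound via orthogonality of the coordinate blocks makes the ``no cross-block cancellation'' point explicit, which the paper leaves implicit.

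There is, however, one step that fails as stated: the claim that ``the non-triviality assumption guarantees that the last sub-component of $x^{(i)}$ (the one multiplied by the highest binomial coefficient) is non-zero.'' The paper's non-triviality hypothesis only requires \emph{at least one} non-zero component of $x$ per Jordan block. If the last sub-component of $x^{(i)}$ vanishes, the contribution of block $J_i$ grows only like $n^{k_i-1}|\lambda|^n$, where $k_i$ is the index of the last non-zero sub-component of $x^{(i)}$ --- not like $n^{D_i-1}|\lambda|^n$. Consequently a nominally dominant block ($i\le T$) can be outgrown by a nominally subdominant one, and the set of blocks that survive the normalization (hence the subspace $P$) depends on $x$ and need not be $\{1,\dots,T\}$; with your $P=P_1\oplus\cdots\oplus P_T$ the limit $d(\PP J^nx,\PP P)\to 0$ can then fail (e.g.\ a $2\times 2$ real block with $x^{(1)}=(1,0)^\top$ loses its polynomial factor entirely). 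The fix is to replace $D_i$ by $k_i$ throughout and sum over the blocks maximizing $k_i$; note that one should then also flag that the dimension bound $k\le 2T$ and the phrase ``determined by the eigenvalues and Jordan blocks'' in the lemma statement really require this $x$-dependence (the paper's own proof quietly has the same issue, and separately observes that repeated blocks for the \emph{same} eigenvalue collapse to a single $P_i$, which is why its $k$ can be as small as $1$ while your direct sum gives $k\ge T$; your larger $P$ still satisfies the stated conclusion).
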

\begin{proof}
It suffices to prove the statement by analyzing two cases: the first case is when $T = 1$, and the second case is when $T =N$. Once these cases have been shown they can be easily combined to obtain the general statement.

\smallskip
\emph{Case 1: T = 1.}  
When $T = 1$ we have a single Jordan block $J_1$ such that $D_1 > D_2 \geq \cdots \geq D_N$.
Hence, the iteration of $J$ induces the highest polynomial growth in the components corresponding to the one or two-dimensional invariant subspace under the first Jordan block $J_1$, cf. the solution formula Lemma \ref{lemJordanPowers} above.

Following the proof of Lemma \ref{lemJordanBlockLimitOne}, it is clear that the component-wise limit of $|J^nx / (J^nx)_1|$ as $n \rightarrow \infty$ gives $(1,0,\dots,0)^\top$.
Thus, all elements $\PP x$ where $x$ has at least one non-zero component with respect to $J_1$, get taken to $\PP P$ in the induced dynamical system in real projective space, where $P$ is either one or two-dimensional depending on whether $J_1$ is associated with a real or complex-conjugate pair of eigenvalues, cf. Lemma \ref{lemJordanBlockLimitOne}.

\smallskip
\emph{Case 2: T = N.}
In the case, all the $D_i$ are equal, and hence no subset of larger Jordan blocks composing $J$ dictate the asymptotic behavior. Let $P_i$ denote the one or two-dimensional invariant subspace under $J_i$, depending on whether $\lambda_i$ is real or complex, respectively.
If $x_{J_i}$ denotes the components of $x$ with respect to block $J_i$, then we know by Lemma \ref{lemJordanBlockLimitOne} that iteration of $J_i$ over $x_{J_i}$ brings $\PP x_{J_i}$ to $\PP P_i$ as $n\rightarrow \infty$.

We first note that adding many Jordan blocks of the same dimension, all corresponding to the \emph{same eigenvalue} does not increase the dimension of $P=P(x, J)$. This follows from the argument used in proving Lemma \ref{lemJordanBlockLimitOne}. Namely, suppose $J$ is a block diagonal matrix composed of $N$ repeated Jordan blocks $J_i$, and let $x$ be some vector such that $x_{J_i} \neq 0$ for each $J_i$. Then take the limit of $|J^{n}x / (J^{n}x)_j|$ as $n\rightarrow \infty$, where $|(J^nx)_j|$ is the largest component of $J^nx$. Then the limiting vector will have form $(c_1,0,\dots,c_i,0,\dots,1,0,\dots,c_N,0,\dots0)^\top$, where $0 \leq c_i \leq 1$, and the $1$ is in the $j$th position, and the values of the $c_i$ can be effectively computed with the coordinates of $x$ through taking appropriate ratios of the components of $x$.
Hence, when the repeated Jordan blocks all correspond to a real eigenvalue, the invariant subspace $P$ is of dimension one. And when the repeated Jordan blocks correspond to a complex conjugate pair, $P$ is of dimension two.

Hence provided with this fact, in the $T=N$ case we may assume without any loss of generality that each Jordan block $J_i$ (and therefore invariant space $P_i$) is paired with a distinct eigenvalue or complex conjugate pair (up to a $-1$ factor), with no Jordan block repeated more than once. Then, as an immediate consequence of the block-diagonal structure of $J$, we have that
\[
P = \bigoplus_i P_i,
\]
where the $P_i$ are the one or two-dimensional subspaces invariant under corresponding collections of Jordan blocks associated with the same eigenvalue(s), and $\oplus$ is the direct sum.
Using Lemma \ref{lemJordanBlockLimitOne}, we have
\[
\lim_{n\rightarrow \infty} d(\PP J^nx, \PP P) = 0.
\]
In addition, following from the above argument it is clear that $\text{dim}(P) = \sum_i \text{dim}(P_i)$, and hence $\text{dim}(P) = k$ with $1 \leq k \leq N=T$.

\smallskip
Collecting the arguments establishing Cases 1 and 2 above, the statement follows: the subspace the dynamics are asymptotic to is determined by the Jordan blocks of largest dimension.
\end{proof}

\begin{remark}
In Case 2 of the proof of Lemma \ref{lemJordanBlockLimitTwo}, we emphasize that as long as $x$ has one non-zero component with respect to each Jordan block then the proof goes through. What our proof perhaps leaves unclear in regard to this matter is that in the limit discussed in the second paragraph, we have $0 \leq c_i \leq 1$, and hence it is possible that a $c_i=0$. But this is fully allowed, since in this case the particular orbit will still approach $P$, even though it is simply approaching a particular subset of $P$.
\end{remark}

\begin{remark}
We also note that in Case 2 of the proof of Lemma \ref{lemJordanBlockLimitTwo}, we underscore the fact that when there are many Jordan blocks of the same size but that do not correspond to a repeated eigenvalue, but rather the same eigenvalue up to a factor of $-1$, then the orbit $\{\PP J^{2n}x\}$ will converge to one subset $\PP P$ in projective space, while the orbit $\{\PP J^{2n+1}x\}$ will converge to another subset $\PP P'$ of projective space. This follows from the fact that certain components will change sign with every iteration.
However, for the purpose this paper and the proofs of Theorems \ref{thmOne} and \ref{thmSkolemTwo}, we assume non-degeneracy so no two distinct eigenvalues of $A$ will be equal up to a $-1$ factor, and hence this aspect can be safely ignored.
\end{remark}

The following lemma establishes the case when the Jordan blocks composing a block-diagonal Jordan matrix $J$ correspond to eigenvalues of different magnitude.

\begin{lemma}\label{lemJordanBlockLimitThree}
Let $J = \text{Diag}(J_1,\dots,J_N)$ be a block diagonal matrix, where each Jordan block $J_i$ is associated with a distinct real eigenvalue or complex conjugate pair.
Moreover, suppose $|\lambda_1| > |\lambda_2| > \cdots > |\lambda_N|$.
Then there is a subspace $P$ of dimension one (two) whenever $\lambda_1$ is real (complex) invariant under $J_1$, such that
\[
\lim_{n\rightarrow \infty} d(\PP J^n x, \PP P) = 0,
\]
where $d$ is the metric over real projective space, and $x$ has at least one non-zero component with respect to block $J_1$.
\end{lemma}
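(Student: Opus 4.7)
The plan is to follow the template of Lemmas \ref{lemJordanBlockLimitOne} and \ref{lemJordanBlockLimitTwo}, exploiting the elementary principle that pure exponential growth at rate $|\lambda_1|$ dominates any expression of the form $\mathrm{poly}(n)\cdot|\lambda_i|^n$ whenever $|\lambda_i|<|\lambda_1|$. By Lemma \ref{lemAngleConverge} it suffices to show $\Gamma(J^n x, P)\to 1$, where $P$ is the one- or two-dimensional $J_1$-invariant subspace supplied by Lemma \ref{lemJordanBlockLimitOne}, embedded in $\RR^d$ via the natural inclusion on the $J_1$-coordinates.

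First I would decompose $x = (x_{J_1}, \dots, x_{J_N})^\top$ according to the block structure, so that $J^n x$ splits blockwise as $(J_1^n x_{J_1}, \dots, J_N^n x_{J_N})^\top$. From Lemma \ref{lemJordanPowers} together with \eqref{complexJordanIterate} I would extract the upper bounds $\|J_i^n x_{J_i}\| \leq C\, n^{D_i-1}|\lambda_i|^n$ for $i \geq 2$, and the lower bound $\|J_1^n x_{J_1}\| \geq c\, |\lambda_1|^n$ valid for all sufficiently large $n$ (the latter uses the hypothesis that $x_{J_1}$ has at least one non-zero entry, so some term scaling like $|\lambda_1|^n$ survives in $J_1^n x_{J_1}$). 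Normalizing $J^n x$ by $\|J_1^n x_{J_1}\|$, the comparison
\[
\frac{\|J_i^n x_{J_i}\|}{\|J_1^n x_{J_1}\|} \;\leq\; C'\, n^{D_i-1}\!\left(\frac{|\lambda_i|}{|\lambda_1|}\right)^n \longrightarrow 0 \qquad (i \geq 2)
\]
shows that all coordinates outside of the $J_1$ block become negligible, so the problem reduces to the single-block case of Lemma \ref{lemJordanBlockLimitOne} applied to $J_1$ acting on $x_{J_1}$. That lemma then yields $\Gamma(J^n x, P) \to 1$, and Lemma \ref{lemAngleConverge} closes the argument.

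The step I expect to require the most care is the complex case, where $P$ is two-dimensional and $J_1^n x_{J_1}$ does not converge as a direction inside $P$ but instead rotates within it. This is not actually an obstruction: the goal is convergence to the subspace $\PP P$, not to a point, and the variational definition \eqref{eqGamma} of $\Gamma$ maximizes over $y$ in $P$, so the rotation is absorbed by the optimization, exactly as in Case 2 of Lemma \ref{lemJordanBlockLimitTwo}. A secondary bookkeeping concern is that some $x_{J_i}$ for $i \geq 2$ may vanish entirely, but this only strengthens the ratio estimate above, and no hypothesis is placed on those blocks.
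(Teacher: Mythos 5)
Your proposal is correct and follows essentially the same route as the paper: both arguments rest on the observation that $\mathrm{poly}(n)\,|\lambda_i|^n / |\lambda_1|^n \to 0$ for $i \geq 2$, so the $J_1$ block dominates and the conclusion reduces to Lemma \ref{lemJordanBlockLimitOne}. Your version simply makes the domination quantitative via explicit blockwise norm bounds (and correctly notes that the rotation within a two-dimensional $P$ is absorbed by the maximization in $\Gamma$), which is a slightly more careful rendering of the same argument.
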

\begin{proof}
It suffices to show that the Jordan block $J_1$ associated with the largest eigenvalue dominates the dynamics asymptotically, independent of the sizes of the $J_i$.

That $J_1$ dominates the dynamics in the limit is a trivial consequence of the fact that
\[
\lim_{n\rightarrow \infty}\frac{\sum_{i=0}^{l} p_i(n)|\mu|^{n-i}}{|\lambda|^n} = 0, \text{ whenever } |\lambda| > |\mu| \geq 0,
\]
the $p_i(n)$ are polynomials, $l$ is a positive integer, and $n-i = 0$ when $i > n$.

Pairing this fact with the statement and proofs of Lemma \ref{lemJordanBlockLimitOne} and Lemma \ref{lemJordanBlockLimitTwo}, it is clear that $|J^nx|/|(J^nx)_1|\rightarrow (1,0,\dots0)^\top$ as $n \rightarrow \infty$, where $(J^n_x)_1$ labels the first component of $J^nx$. And thus, orbits under iteration of $J$ induce a sequence in real projective space converging to $\PP P$ where $P$ is the one or two dimensional subspace invariant under $J_1$, following the arguments of the proofs in Lemmas \ref{lemJordanBlockLimitOne} and \ref{lemJordanBlockLimitTwo}.
\end{proof}

With these lemmas in hand, we move to prove the theorems stated in the introduction of this paper.

\section{Proofs of Theorems}\label{sectionTheorems}

We begin with the proof of Theorem \ref{thmOne}.

\begin{proof}[Proof of Theorem \ref{thmOne}]
Consider the following algorithm deciding instances $(A, x, U)$ of the Orbit Problem satisfying the assumptions of the Theorem. If the given instance is degenerate, reduce it to a finite set of non-degenerate instances and solve each independently using the following algorithm.

\begin{enumerate}
\item Using Lemma \ref{lemmaReduce}, transform $A$ into its Jordan canonical form $J = J(A)$, and use the invertible matrices $Q$ used in obtaining the Jordan canonical form to transfer $x$ and $U$ to the same basis. We now suppose $x$ and $U$ are presented in the new basis without updating their label.

During the computation of $J$, collect the eigenvalues $\lambda_1,\dots,\lambda_m$ of $A$, organized so that $|\lambda_1|\geq \cdots \geq |\lambda_m|$. Suppose $\lambda_1,\dots,\lambda_r$, $r \leq m$ are the eigenvalues of maximal modulus. Every eigenvalue $\lambda_i$ will correspond to one or more Jordan blocks composing $J$, where the number of Jordan blocks associated with an eigenvalue is determined by the algebraic and geometric multiplicity of the eigenvalue, along with the elementary divisors of the characteristic polynomial of $A$.

Let $P_A$ label the minimal polynomial of $A$. Then note that the roots of $P_A$ are the eigenvalues of $A$, and the multiplicity of each root in $P_A$ is the dimension of the \emph{largest} Jordan block associated with the respective root in the Jordan canonical form of $A$. This is a trivial consequence of the definition of the minimal polynomial of $A$.
However here we use the \emph{real} Jordan canonical form, and thus the complex roots of the minimal polynomial will correspond to Jordan blocks of dimension twice their multiplicity.

Following Lemma \ref{lemJordanBlockLimitThree}, the Jordan blocks associated with the $r$ eigenvalues of largest modulus dictate the asymptotic behavior of the orbit $\{\PP J^n x\}_{n=0}^\infty$. More specifically, by Lemma \ref{lemJordanBlockLimitTwo}, of those $r$ dominating eigenvalues, those corresponding to the Jordan blocks of largest dimension dominate the asymptotic behavior (up to a possible rescaling of $1/2$ for the complex eigenvalues due to the real Jordan canonical form). Precisely, by Lemma \ref{lemJordanBlockLimitTwo} the orbit $\{\PP J^nx\}_{n=0}^\infty$ converges to a subset $\PP P$ determined by the largest eigenvalues, and of those the Jordan blocks of highest dimension corresponding to such eigenvalues.

By assumption of Theorem \ref{thmOne}, there are $p\leq r$ distinct eigenvalues of maximal modulus with highest multiplicity in $P_A$.
As such, by Lemma \ref{lemJordanBlockLimitTwo} and the non-degeneracy assumption, there is a single subspace $P$ of dimension $p$ such that $d(\PP J^n x, \PP P) \rightarrow 0$ as $n \rightarrow \infty$ whenever $x$ has non-zero components with respect to the Jordan blocks composing $J$ --- which is ensured by the non-triviality assumption placed on $x$ in the theorem statement.

Return the subspace $P=P(J, x)$.

\item The subspace $P$ is taken to be presented as a set of rational basis vectors: it is computable given $x$, along with determining the position of the Jordan blocks of largest dimension corresponding to the eigenvalues of maximal modulus in the matrix $J$. This fact follows from the proofs of Lemmas \ref{lemJordanBlockLimitOne} and \ref{lemJordanBlockLimitTwo}; that is, $P$ is simply a direct sum of either standard basis vectors in the Jordan basis, or invariant subspaces described by rational linear combinations of standard basis vectors, the coefficients of which are effectively computable from the entries of $x$ as commented in Lemma \ref{lemJordanBlockLimitTwo}.

Compute $P \cap U$. If $P \cap U = \{0\}$, continue. Otherwise halt and terminate execution of the algorithm; this instance of the Orbit Problem cannot be decided by this algorithm.

\item We have $P \cap U =\{0\}$. Then using the minimal angle algorithm following from the singular value decomposition as discussed in Section \ref{secAnglesBetweenFlats}, compute the quantity
\[
\sigma_{max} = \Gamma(P, U).
\]
where the value $\Gamma(P, U)$ corresponds to the cosine of the minimal angle between $P$ and $U$.

Since $P \cap U = \{0\}$, it follows that $\sigma_{max} < 1$.
Let $\zeta = 1 - \sigma_{max}$. Then there must exist an $\epsilon = \epsilon(\zeta)>0$ such that $d(\PP P, \PP U) = \epsilon$.

\item Begin iterating $J$ over $x$. Every iteration, compute $\Gamma(J^nx, P)$, and check if $J^nx \in U$. If there is an $n$ such that $J^nx \in U$, halt: the orbit intersects the target set in this instance of the Orbit Problem. 

Otherwise, continue until $\Gamma(J^nx, P) > \sigma_{max}$: since $\Gamma(J^nx, P)\rightarrow 1$ as $n \rightarrow \infty$, by Lemma \ref{lemAngleConverge} $d(\PP J^nx, \PP P)\rightarrow 0$ as $n \rightarrow \infty$. But there is an $\epsilon > 0$ such that $d(\PP P, \PP U) = \epsilon$, and hence there is an $N \in \NN$ such that for all $n \geq N$, $\PP J^nx \not\in \PP U$. Hence, there is an $N'=N'(\sigma_{max}) \in \NN$ such that $J^n x \not\in U$ for all $n > N'$.
\end{enumerate}
\end{proof}

We now prove Theorem \ref{thmDimOne}.

\begin{proof}[Proof of Theorem \ref{thmDimOne}]
As usual, if the instance is degenerate we can reduce to a finite number of non-degenerate instances to check. Hence we assume non-degeneracy.
We begin by arguing the case when $A$ is nonsingular.

Take $(A, x, U)$ is a non-degenerate instance of the higher-dimensional Orbit Problem with $A$ nonsingular and $U$ of dimension one.
In addition, recall that it is assumed that the modulus of the eigenvalues of $A$ are all distinct, up to complex conjugate pairs.
Note that immediate from Theorem \ref{thmOne} and its proof, if the orbit $\{A^nx\}_{n=0}^\infty$ approaches a subspace $W$ after projecting onto real projective space, and $W \cap U = \{0\}$, then the instance is decidable.
Moreover, the condition that $\text{dim}(U) = 1$ enforces either $U \subseteq W$ or $U \cap W = \{0\}$.
Hence, we are left to consider the case when $U \subseteq W$.

$A$ is taken to be nonsingular with eigenvalues of distinct modulus up to conjugate pairs, so the dimension of the attracting subspace $W$ is strictly less than that of the ambient space, unless $d=2$ with $A$ admitting a complex conjugate eigenvalue pair. But in such a situation $A$ merely acts as a rational or irrational rotation of the plane up to scaling, and with $U$ one-dimensional this problem is well known to be decidable.

Suppose $x \not\in W$: then since $A$ is nonsingular $A^n x \not\in U$ for all $n$ since $W$ is invariant under $A$. If $x \in W$, then consider the reduced, lower-dimensional system $A:W \rightarrow W$ and induct until one of the above conditions is met or $\text{dim}(W) = 1$ (halt and return ``yes") or $\text{dim}(W) = 2$ (consider the circle rotation). Notice we can continue this induction since the eigenvalues of $A$ have distinct modulus.

Finally, note that if $A$ is singular, decidability is proved via essentially identical argument. Namely, in the singular case $A$ collapses points to some subspace $Q$, which is readily computed. Check if $x$ or $Ax \in U$. If not, via basic argument one may consider the restricted system $A|_Q$ with $U \cap Q$, and ask the orbit problem in this lower-dimensional setting, then proceed as before.
\end{proof}

The dynamical and geometric nature of the techniques used here provides simple proofs of other old results, such as the following concerning the decidability of Skolem's Problem in instances of a dominating real root. We use the proof of this result to then aid in proving Theorem \ref{thmSkolemTwo}.

\begin{proposition}\label{thmSkolemOne}
Let $\{u_n\}_{n=0}^\infty$ be an order $d$ non-degenerate linear recurrence sequence with distinct characteristic roots $\lambda_1,\dots,\lambda_m$, $m \leq d$, ordered so that $|\lambda_1|\geq \cdots \geq |\lambda_m|$. Suppose the initial terms form a vector $x \in \QQ^d$ non-trivial with respect to the companion matrix $A$ of the sequence. Then, if $|\lambda_1|> |\lambda_2|$, it is decidable whether the sequence has a zero term.
\end{proposition}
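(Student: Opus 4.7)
The plan is to cast Skolem's Problem as an Orbit Problem instance and verify that the hypothesis of Theorem \ref{thmOne} is automatically satisfied. With $A$ the companion matrix and $x = (u_{d-1},\dots,u_0)^\top$, the shift action gives $A^n x = (u_{n+d-1},u_{n+d-2},\dots,u_n)^\top$, so $u_n = 0$ if and only if $A^n x$ lies in the coordinate hyperplane $U = \{y \in \QQ^d : y_d = 0\}$. Deciding whether the LRS has a zero term therefore reduces (after a trivial $n=0$ check) to the Orbit Problem instance $(A, x, U)$.

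Because $|\lambda_1| > |\lambda_2|$ and complex roots of a rational LRS come in conjugate pairs of equal modulus, the dominant root $\lambda_1$ must be real, so in the notation of Theorem \ref{thmOne} we have $r = p = 1$. The theorem then produces a one-dimensional subspace $W$, namely the span of the $\lambda_1$-eigenvector of $A$, and guarantees decidability provided $W \cap U = \{0\}$. Since $W$ is one-dimensional and $U$ is a hyperplane, the only way this could fail is $W \subseteq U$.

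The main step is ruling out this failure mode, and it is here that the companion matrix structure is essential. A direct computation using the shift rows of $A$ shows that the $\lambda_1$-eigenvector is of Vandermonde type,
\[
v_1 = (\lambda_1^{d-1}, \lambda_1^{d-2}, \dots, \lambda_1, 1)^\top,
\]
up to scaling. Since $a_0 \neq 0$ in the definition of an LRS, the characteristic polynomial has nonzero constant term and $\lambda_1 \neq 0$; in particular the last coordinate of $v_1$ equals $1$, so $v_1 \notin U$ and hence $W \cap U = \{0\}$. After the routine basis-change bookkeeping needed to pass to the Jordan basis used by the algorithm of Theorem \ref{thmOne} --- noting that the relation $W \cap U = \{0\}$ is basis-independent --- the proposition follows directly. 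I do not anticipate any substantive difficulty; the only delicate point is the observation that companion-matrix eigenvectors have no vanishing entries when the eigenvalue is nonzero, which is precisely what rules out the single obstruction in Theorem \ref{thmOne} and lets the dominant-root hypothesis close the argument.
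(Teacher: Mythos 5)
Your proof is correct, but the key step differs from the paper's. Both arguments reduce the zero problem to Theorem \ref{thmOne} applied to the companion matrix, and both observe that the strict dominance $|\lambda_1|>|\lambda_2|$ forces $\lambda_1$ to be real (conjugate pairs share a modulus), so the attracting subspace $W$ is a line. Where you diverge is in how the obstruction $W\subseteq U$ is ruled out. You compute the dominant eigenvector of the companion matrix explicitly as the Vandermonde vector $(\lambda_1^{d-1},\dots,\lambda_1,1)^\top$ and note its last coordinate is nonzero (indeed, for a companion matrix the last coordinate of any eigenvector is automatically nonzero, so you do not even need $a_0\neq 0$ here; also note that since companion matrices are non-derogatory, $W$ is still the eigenvector span when $\lambda_1$ has multiplicity greater than one, so your identification of $W$ survives that case). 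Hence the natural target $E_d=\{y:y_d=0\}$ already satisfies $W\cap E_d=\{0\}$. The paper instead never computes the eigenvector: it introduces all $d$ coordinate hyperplanes $E_1,\dots,E_d$, observes that $\bigcap_i E_i=\{0\}$ so \emph{some} $E_i$ misses the attracting line, and then uses the shift structure of the companion matrix (a zero of the sequence passes through every coordinate position at shifted time indices) to transfer the reachability question to that convenient $E_i$, checking the finitely many leftover initial terms by hand. Your route buys a more concrete and self-contained argument with no index bookkeeping; the paper's route buys robustness, since it needs nothing about the eigenvector beyond its being nonzero, which is why it extends verbatim to the setting of Theorem \ref{thmSkolemTwo}. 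Both are valid proofs of the proposition.
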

\begin{proof}
Let $A \in \QQ^{d \times d}$ label the companion matrix of the LRS $\{u_n\}$. Let $x$ denote the non-trivial $d$ dimensional vector of initial terms of the LRS. Then, as noted in Section \ref{secLinearRecurrence}, iteration of $A$ over $x$ ``shifts" the elements of the LRS through $x$, so that when $x = (u_d,u_{d-1},\dots,u_2,u_1)^\top$, $Ax = (u_{d+1}, u_d,\dots,u_3,u_2)^\top$.

Let $E_1,E_2,\dots,E_d$ denote the $d$ coordinate subspaces of $\RR^d$ where the elements of $E_i$, $i=1,\dots,d$, have a $0$ in their $i$th component. Then $AE_d \subseteq E_{d-1},AE_{d-1}\subseteq E_{d-2},\dots,AE_{2}\subseteq E_1$. Moreover, $E_1\cap E_2 \cap\cdots \cap E_d =\{0\}$. Hence, for any $y \neq 0 \in \QQ^d$, $y \not\in E_1 \cap \cdots \cap E_d$, implying there is an $E_i$ such that $y \not \in E_i$.

The LRS $\{u_n\}$ is taken to have a single dominating real root $\lambda_1$. Hence $A$ has a single dominating real root. Taking $A$ to its Jordan canonical form and back via a change of basis, we see by consequence of Lemmas \ref{lemJordanBlockLimitOne}, \ref{lemJordanBlockLimitTwo}, \ref{lemJordanBlockLimitThree} that there is a line spanned by some $y \neq 0 \in \QQ^d$, such that $d(\PP A^nx, \PP y) \rightarrow 0$ as $n\rightarrow \infty$. But there is an $E_i$ such that $y \not \in E_i$, i.e. $\text{span}(y) \cap E_i = \{0\}$. Then, appealing to the statement and proof of Theorem \ref{thmOne}, it can be decided in a finite number of iterations of $A$ over $x$ whether the LRS has a zero.
\end{proof}

We now prove Theorem \ref{thmSkolemTwo}.

\begin{proof}[Proof of Theorem \ref{thmSkolemTwo}]
The proof is a trivial extension of the proof of Proposition \ref{thmSkolemOne} with Lemma \ref{lemJordanBlockLimitTwo}. Let $A \in \QQ^{d \times d}$ label the companion matrix of the LRS, and let $P_A$ denote the minimal polynomial of $A$.

In Proposition \ref{thmSkolemOne}, there is a single dominating characteristic root, while in the case of Theorem \ref{thmSkolemTwo} there are $r \geq 1$ characteristic roots of maximal modulus. However, by assumption, there is a single real root, $\lambda_1$, whose multiplicity in the minimal polynomial $P_A$ of $A$ is larger than the multiplicity of the other dominating roots. This implies that, of the $r$ dominating roots, the Jordan block of largest dimension which is associated with $\lambda_1$ dictates the asymptotic behavior. Then by Lemma \ref{lemJordanBlockLimitTwo}, after projecting to real projective space, orbits approach the projection of a line, and hence must eventually be bounded away from at least one of the $E_i$ after a finite number of iterations, where the $E_i$ are as defined in the proof of Proposition \ref{thmSkolemOne}.
\end{proof}

We now move to consider the continuous Orbit Problem, studying flows rather than maps. It is in this setting that the foregoing development of dynamics in projective space is most useful: as in the discrete case, every flow is asymptotic to some stationary point or cycle. Then, we have a way of bounding the location of flows in projective space, which aids in decidability; particularly because when considering flows, traversal from one side of a boundary to another implies intersecting the boundary.

\begin{proof}[Proof of Theorem \ref{thmContTwo}.]
In the case $A$ has a complex-conjugate pair of eigenvalues, the projection of the flow onto projective space is simply a cycle. Hence the cycle necessarily intersects with the target line, which is a point in $\mathbb{RP}$, infinitely often. This holds independently of whether the real component of the complex eigenvalue pair is negative, positive, or zero.

Whenever $A$ has two distinct real eigenvalues or a single repeated real eigenvalue, the flow is stationary when the eigenvalue(s) are zero, and otherwise if not zero the flow approaches a stationary point $p \in \mathbb{RP}$, computable from $A$ by translating to the Jordan normal form. Since the target space $U$ has rational entries, it is decidable whether $p = \mathbb{P}U$. If $p = \mathbb{P}U$, and $x(0) \neq U$, then $x(t) \not\in U$ for all $t\in \mathbb{R}^+$, since $U$ is a stationary point in projective space and flows are unique. Otherwise, when $p \neq \mathbb{P}U$, check if $\mathbb{P}x(0)$ and $p$ are on the same side of $\mathbb{P}U$. If so, $x(t) \not\in U$ for all $t \in \mathbb{R}^+$. If not, there exists a $t \in \mathbb{R}^+$ such that $x(t) \in U$ because $\mathbb{P}x(t)$ is forced to cross $\mathbb{P} U$ on approach to $p$.
\end{proof}

\section{Concluding remarks}\label{secConclusion}

This paper does reach the limits of the techniques presented here, nor do we apply this machinery to every problem vulnerable to our methods. Indeed, the results of this paper indicate that deepening our understanding of dynamical systems in real projective space $\RR\PP^{d-1}$ with maps induced by matrices in $GL(d, \RR)$ provides insight into the higher-dimensional Orbit Problem, and more generally termination problems, by way of the arguments given in this paper. 
And, possibly, the geometric and dynamical mechanisms penetrating the Orbit Problem may link to the algebraic and number-theoretic structures traditionally employed in nontrivial ways. In particular, we believe that mixing the general geometric and dynamical structures provided here with the finer algebraic and number-theoretic tools traditionally used can lead to additional breakthroughs in this area.

To this end, we identify a number of different directions that may be profitable to explore, given the methods presented here. We begin by considering the following: every linear system in $\RR^d$ induces a dynamical system on the set of $k$-dimensional subspaces of $\RR^d$ --- the Grassmannians. A possibly rewarding next step could be to generalize the results of this paper further by studying such induced dynamics on Grassmannians, where a more detailed understanding of the dynamics of this kind can result in stronger results toward the Orbit Problem.

Next, we reflect that the essential reason why we require that $W \cap U = \{0\}$ in order to have decidability in Theorem \ref{thmOne}, follows from the main observation this paper makes, which is that orbits approach the projection of $W$ in real projective space, i.e. the angles between orbits and $W$ monotonically approaches zero, and as a consequence whenever $W \cap U = \{0\}$ orbits can be bounded away from $U$ in finite time. Fortunately, if $A, x$, and $U$ are randomly generated by drawing entries at random from some finite set, when $A$ has $r$ dominating roots, $p$ of which have highest multiplicity in the minimal polynomial of $A$, and $U$ dimension $\leq d-p$, then with overwhelmingly high probability $W \cap U = \{0\}$. Thus, arguing from this intuitive level, Theorem \ref{thmOne} decides a ``large" class of instances.

Nonetheless, difficulty arises when the intersection of $W$ and $U$ is nontrivial. In such cases, the orbits approach $U$, or periodically get arbitrarily close to $U$. But this is when the basic argument employed in this paper can no longer be exploited. To overcome this barrier, finer methods must be used. In particular, if it is better understood \emph{how} induced orbits in projective space approach attracting sets, then this will immediately translate to deciding the Orbit Problem. Fortunately, the attracting sets in projective space have a nice structure: they are either fixed points under the induced map or contained in closed loops. To this end, possibly more can be said in the continuous case, where we consider flows.

Although, we remark that when we have containment $W \subseteq U$, we do obtain something close to decidability, since orbits $\{\PP A^nx\}_{n=0}^\infty$ approach $\PP U$ as $n\rightarrow \infty$, leading to decidability when we work with notions such as ``pseudo-orbits," already explored in literature \cite{akshay2022robustness, CostaPseudoReachabilityDiag,dcostaPseudoSkolem}. 
As such, when $W \subseteq U$ we obtain ``pseudo-decidability:" orbits converge toward the target set. And, following the proof of Theorem \ref{thmDimOne}, if $A$ is nonsingular then we obtain decidability when $x \not \in W$.
Indeed, in the context of program verification and the Termination Problem, this asymptotic behavior indicates such orbits could enter the error state with sufficiently large perturbation. To this end, perhaps progress can be made on deciding cases in which $W \subseteq U$.

\bibliographystyle{abbrv}
\bibliography{references}

\end{document}